\newcommand\R{\mathcal R}
\newtheorem{theorem}{Theorem}
\newtheorem{lemma}[theorem]{Lemma}
\newtheorem{definition}[theorem]{Definition}
\newtheorem{observe}[theorem]{Observation}
\newtheorem{remark1}[theorem]{Remark}
\newenvironment{remark}{\begin{remark1} \rm}{\end{remark1}}
\def\R{\mathbb R}
\def\xb{{\bf x}}
\def\nub{{\boldsymbol \nu}}
\def\Bs{{\cal B}}
\def\Rs{{\cal R}}
\def\Cs{{\cal C}}
\def\Ds{{\cal D}}
\def\Hs{{\cal H}}
\def\tlambda{{\tilde\lambda}}
\def\tmu{{\tilde\mu}}
\def\talpha{{\tilde\alpha}}
\title{A fast multipole method for the evaluation of
elastostatic fields in a half-space with zero normal stress}
\author{Zydrunas Gimbutas\thanks{%
Information Technology Laboratory,
National Institute of Standards and Technology,
325 Broadway, Mail Stop 891.01,
Boulder, CO  80305-3328.
{{\em email}: {\sf {zydrunas.gimbutas@nist.gov}}}. 
Contributions by staff of NIST, an agency of the U.S. Government, 
are not subject to copyright within the United States.
}
\and Leslie Greengard\thanks{Courant Institute of Mathematical Sciences,
         New York University, 
         251 Mercer Street,
         New York, NY 10012-1110.
{{\em email}: {\sf {greengard@cims.nyu.edu}}}. }}
\begin{document}

\maketitle

\begin{abstract}
In this paper, we present a fast multipole method (FMM) for the 
half-space Green's function in 
a homogeneous elastic half-space subject to zero normal stress, 
for which an explicit solution was given by Mindlin (1936). 
The image structure of this Green's function is unbounded, so that standard
outgoing representations are not easily available.
We introduce two such representations here, one involving an expansion 
in plane waves and one involving a modified multipole expansion. Both
play a role in the FMM implementation.
\end{abstract}

{{\bf Key words.} Fast multipole method; Linear elasticity; Mindlin's solution.}

\section{Introduction}


A classical problem in linear elasticity concerns the computation of
the displacement, stress and strain due to force and dislocation
sources with suitable boundary conditions imposed on a half-space.
The case of zero normal stress is of particular importance in
geophysical applications, for which the exact solution was derived by
Mindlin \cite{mindlin}.

We will concentrate here on the question of 
accelerating the evaluation of the field 
due to a collection of such force and dislocation vectors.
More precisely, we will describe two new analytic 
representations for the image structure in Mindlin's solution
that can be incorporated into a fast multipole method
(FMM). With $N$ sources and $M$ sensor/target locations, the FMM 
reduces the cost of evaluating the fields from $O(NM)$ to $O(N+M)$.
The FMM can also be used to accelerate
integral-equation based methods for elastostatic boundary value problems
on surfaces embedded in the half-space,
avoiding the ill-conditioning
associated with finite element and finite difference discretizations
of the underlying partial differential equations.

We will begin with a discussion of the mathematical foundations for 
the new scheme, followed
by a brief description of the full FMM implementation. 
For readers unfamiliar with fast multipole methods, we suggest the papers
\cite{fmmadap,fmmActa} to gain some familiarity, although the mathematical
treatment here is largely self-contained.

In section \ref{sec:numerical}, we present
numerical experiments both for collections of 
singular sources and for the evaluation of layer potentials using
the quadrature method of \cite{mindlinquad}.

\section{The Mindlin solution}

To fix notation, let us first consider the 
displacement at an observation
point $(x_1,x_2,x_3)$ due to a force 
vector ${\bf F} = (F_1,F_2,F_3)$ acting at the source point
$(\xi_1,\xi_2,\xi_3)$ in free-space.
The solution is given by the well-known Kelvin solution:
\[ {u_i} = {K_i^j} F_j \, ,  \]
where
\begin{equation}
 {K_i^j} = \frac{1}{8 \pi \mu}  \left\{  (2-\alpha)
\frac{\delta_{ij}}{r} + \alpha \frac{(x_i-\xi_i)(x_j-\xi_j)}{r^3} \right\} \, ,
\label{slpkernel}
\end{equation}
$r = \sqrt{ (x_1-\xi_1)^2 + (x_2-\xi_2)^2 + (x_3-\xi_3)^2}$, 
$\alpha = (\lambda+\mu)/(\lambda+ 2\mu)$,
and $\lambda,\mu$ are the Lam\'{e} coefficients. 
(In the preceding expressions, and throughout the paper,
we will generally make use of the standard summation convention.
On occasion we will write out the formulas explicitly when it 
makes the analysis clearer.)

Formulas for the strain ${\varepsilon_{ij}}$ and stress
${\sigma_{ij}}$ tensors can be obtained from partial derivatives of
the preceding formulas for displacement with respect to each component
$x_i$:
\begin{equation}
  {\varepsilon_{ij}} = \frac{1}{2} 
\left( \frac{\partial {u_i}}{\partial x_j} 
             + \frac{\partial {u_j}}{\partial x_i} \right) \, ,
\end{equation}
\begin{equation}
  {\sigma_{ij}} = \lambda \delta_{ij} \frac{\partial {u_n}}{\partial x_n} +
     \mu \left( \frac{\partial {u_i}}{\partial x_j} 
             + \frac{\partial {u_j}}{\partial x_i} \right) \, .
\end{equation}

A number of fast methods for the Kelvin solution
have been developed, based either on the FFT or the FMM 
\cite{fongdarve,frangi,duraiswami1,nishimura1,tornberg,wang2,wangwhite,zorin1}.

In a half-space,
the solution is more complicated, involving several image sources. 
We assume that the $x_3$-axis points up and that sources
$Q=(\xi_1,\xi_2,\xi_3)$ and targets $P=(x_1,x_2,x_3)$ are in the lower
half-space ($x_3,\xi_3<0$). 
With a slight modification of Okada's notation \cite{okada}, we let 
\[ R_1 = x_1-\xi_1 \ , \ R_2 = x_2-\xi_2 \ ,\ R_3 = -(x_3 + \xi_3), \]  
corresponding to the usual Cartesian components of the vector from the
image source $(\xi_1,\xi_2,-\xi_3)$ to the target, with the 
sign flipped in the $R_3$ component. Note that $R_3 \geq 0$.  
We denote the distance from the image to the target point by  
\[ R = \sqrt{R_1^2 + R_2^2 + R_3^2} \, . \]

Mindlin showed that the exact solution 
to the half-space problem with zero normal stress 
can be written in the form $u_i = W_i^j F_j$, where
\begin{align}
 W_i^j(P,Q) =\ &{K_i^j}(P,Q) + {A_i^j}(P,Q)  \nonumber \\
              & + {B_i^j}(P,Q) + x_3 {C_i^j}(P,Q) ,
              \label{mindlinfull}
\end{align}
with
\begin{align}
 {A_i^j} = & \frac{1}{8 \pi \mu}  \left\{  \alpha
\frac{\delta_{ij}}{R} + (2-\alpha) \frac{R_iR_j}{R^3} \right\} \, ,
\label{Aij} \\
{B_i^j} =  & \frac{1}{4 \pi \mu}  \left\{
\frac{1-\alpha}{\alpha}
\left[  \frac{\delta_{ij}}{R+R_3} +  
\frac{R_i \delta_{j3}  - R_j \delta_{i3}(1-\delta_{j3})}{R(R+R_3)}  \right. \right.  
\nonumber \\
&  \left. \left. -  \frac{R_i R_j}{R(R+R_3)^2} (1-\delta_{i3})(1-\delta_{j3}) \right] \right\} \, ,
\label{UijBC} \\
{C_i^j} = &  \frac{1}{4 \pi \mu}  (1- 2\delta_{i3}) \left\{
(2-\alpha) \frac{R_i \delta_{j3} - R_j \delta_{i3}}{R^3} + \right. \nonumber \\
& \left. \quad \alpha \xi_3
\left[ \frac{\delta_{ij}}{R^3} - \frac{3R_iR_j}{R^5} \right] \right\}.
\end{align}

\begin{definition}
We will refer to $W_i^j$ as the single-layer kernel in a half-space.
\end{definition}

The first contribution to  ${W_i^j}$ in formula (\ref{mindlinfull}) 
is the ``direct arrival" from the source in a uniform infinite medium,
given by the Kelvin formula (\ref{slpkernel}).
The second piece ${A_i^j}$ has the same form, but with the roles of
$\alpha$ and $(2-\alpha)$ reversed.  Since $R_3 = -x_3-\xi_3$, this is
the arrival at ``target" $(x_1,x_2,-x_3)$ from a source at
$(\xi_1,\xi_2,\xi_3)$ with modified Lam\'{e} coefficients.  Thus,
interactions governed by both the $K_i^j$ and $A_i^j$ contributions
can be computed using the ``free-space" single-layer kernel.
${B_i^j}$ and ${C_i^j}$ are quite different and their analysis is the
principal contribution of this paper.

\begin{remark}
A simple algebraic trick permits the computation of the $A_i^j$ 
contributions. Namely, we set
$\tlambda = \lambda+ 4\mu$ and 
$\tmu = -\mu$. It is easy to check that
\[ \talpha = (\tlambda+\tmu)/(\tlambda+ 2\tmu) = (2-\alpha), \quad
 (2-\talpha) = \alpha \, .
\]
Thus, 
\[ {A_i^j}(x_1,x_2,x_3) =  {K_i^j}[\tlambda,\tmu](x_1,x_2,-x_3) \, , \]
where $K_i^j[\tlambda,\tmu]$ denotes the Kelvin formula with the 
dependence on the Lam\'{e} coefficients made explicit.
\end{remark}

\begin{remark} \label{rmk_slp_signflip}
Note that the argument $x_3$ has been replaced by $-x_3$, so that
some care is required when evaluating terms such as
$\partial u_i/\partial x_l$ which appear in the stress and strain tensors.
\end{remark}

\begin{definition}
The double-layer kernel in a half-space is given by 
\begin{equation}
T_i^j = \left[ \lambda \delta_{jk} \frac{\partial W_i^n}{\partial{\xi_n}}
+ \mu \left( \frac{\partial W_i^j}{\partial{\xi_k}} + 
 \frac{\partial W_i^k}{\partial{\xi_j}} \right) \right] \nu_k.
\label{dlpkernel}
\end{equation}
This kernel describes the displacement field due to a dislocation vector 
${\bf D} = (D_1,D_2,D_3)$ across a surface $S$ with orientation vector 
${\bf \nub} = (\nu_1,\nu_2,\nu_3)$:
\begin{equation}
u_i = \int\int_S T_i^j D_j \, dS \, .
\label{dlpint}
\end{equation}
(Typically, the orientation vector is normal to the surface $S$.)
The dislocation vector ${\bf D}$ is sometimes called a {\em double-force 
vector}.
\end{definition}

To compute $T_i^j$, we note first that
\begin{align}
 \frac{\partial u_i^j}{\partial \xi_k}(x_1,x_2,x_3) = &
 \frac{\partial {K_i^j}}{\partial \xi_k}(x_1,x_2,x_3) + 
\frac{\partial {A_i^j}}{\partial \xi_k}(x_1,x_2,x_3)  \nonumber \\
              & + \frac{\partial {B_i^j}}{\partial \xi_k}(x_1,x_2,x_3) +
               x_3 \frac{\partial {C_i^j}}{\partial \xi_k}(x_1,x_2,x_3) ,
              \label{mindlinfulldlp}
\end{align}
where \cite{mindlin,okada,asme_rev}
\begin{align}
 \frac{\partial {K_i^j}}{\partial \xi_k} &= 
\frac{1}{8 \pi \mu}  \left\{  (2-\alpha)
\frac{(x_k-\xi_k)}{r^3} \delta_{ij} - \alpha \frac{(x_i-\xi_i) \delta_{jk}+
(x_j-\xi_j) 
\delta_{ik}}{r^3} \right. \nonumber \\
&\left. \qquad + 3 \alpha \frac{(x_i-\xi_i)(x_j-\xi_j)(x_k-\xi_k)}{r^5} 
\right\} \, , \\
\frac{\partial {A_i^j}}{\partial \xi_k} &= 
\frac{1}{8 \pi \mu}  \left\{  \alpha
\frac{R_k}{R^3} \delta_{ij} - (2- \alpha) 
\frac{R_i \delta_{jk} +R_j \delta_{ik}}{R^3} +
3 (2-\alpha) \frac{R_i R_j R_k}{R^5}  \right\} , 
\end{align}
\begin{align} 
 \frac{\partial {B_i^j}}{\partial \xi_k} &=  \frac{1}{4 \pi \mu}  \left\{
 - \frac{R_i \delta_{jk}  +  R_j \delta_{ik}  - R_k \delta_{ij}}{R^3} +
 \frac{3 R_i R_j R_k}{R^5}  \right.
 \nonumber \\
 &\qquad  +  \frac{1-\alpha}{\alpha}
\left[  \frac{ \delta_{3k}R + R+k}{R(R+R_3)^2} \delta_{ij} -  
\frac{ \delta{ik} \delta_{j3}  - \delta_{jk} \delta_{i3}(1-\delta_{j3})}{R(R+R_3)}   \right. \nonumber \\
&\qquad  + [R_i \delta_{j3}  - R_j \delta_{i3}(1-\delta_{j3})]
\frac{ \delta_{3k}R^2 + R_k (2R+R_3)}{R^3(R+R_3)^2}  
\label{Bijk} \\
&\qquad   +  
\left[ \frac{R_i \delta_{jk} + R_j \delta_{ik}}{R(R+R_3)^2}  - 
\right. \nonumber \\
&\qquad \qquad \left. \left. \left.
R_iR_j \frac{2 \delta_{3k}R^2 + R_k (3R+R_3)}{R^3(R+R_3)^3} \right]
(1-\delta_{i3})(1-\delta_{j3}) \right] \right\}, \nonumber 
\end{align}
\begin{align} 
 \frac{\partial {C_i^j}}{\partial \xi_k}  &=  \frac{1}{4 \pi \mu}  (1- 2\delta_{i3}) \left\{
(2-\alpha) \left[  \frac{\delta_{jk}\delta_{i3} - \delta_{ik} \delta_{j3}}{R^3} + \nonumber \right. \right. \\
&\qquad\qquad \left. \frac{3 R_k ( R_i\delta_{j3} - R_j \delta_{i3})}{R^5} \right] 
 + \alpha \left[ \frac{\delta_{ij}}{R^3} - \frac{3 R_i R_j}{R^5} \right] \delta_{3k} + \label{Cijk} \\
&\qquad\qquad \left. 3 \alpha \xi_3 \left[ \frac{R_i\delta_{jk} + R_j \delta_{ik} + R_k \delta_{ij}}{R^5} - 
\frac{5R_iR_jR_k}{R^7} \right] \right\}. \nonumber
\end{align}

As for the single-layer kernel,
\[ \frac{\partial {A_i^j}}{\partial \xi_k}(x_1,x_2,x_3)  =  
 \frac{\partial {K_i^j}}{\partial \xi_k}[\tlambda,\tmu](x_1,x_2,-x_3) \, . \]

We need to compute the contribution of 
$A_i^j$ to the double-layer kernel $T_i^j$ according to
(\ref{dlpkernel}):
\[ u_i = \left[ \lambda \delta_{jk} 
 \frac{\partial {A_i^n} }{\partial \xi_n} + \mu 
\left(  \frac{\partial {A_i^j}}{\partial \xi_k} +
 \frac{\partial {A_i^k}}{\partial \xi_j} \right) \right]
\nu_k D_j \, .
\]
Suppose that we invoke the {\em free-space} double-layer kernel 
with $\tlambda,\tmu$
and dislocation vector $-{\bf D}$, so that 
we actually compute
\begin{align*}
 u_i^{*} &= \left[ -\tlambda \delta_{jk} 
 \frac{\partial {A_i^n}}{\partial \xi_n} - \tmu 
\left(  \frac{\partial {A_i^j}}{\partial \xi_k} - 
\frac{\partial {A_i^k}}{\partial \xi_j} \right) \right]
\nu_k D_j  \\
 &= \left[ - \tlambda \delta_{jk} 
 \frac{\partial {A_i^n}}{\partial \xi_n} + \mu 
\left( \frac{\partial {A_i^j}}{\partial \xi_k} + 
\frac{\partial {A_i^k}}{\partial \xi_j} \right) \right]
\nu_k D_j \, .
\end{align*}
Fortunately, the difference is a simple harmonic function: 
\begin{align}
u_i - u_i^{*} & =  \left[ (\tlambda+ \lambda)  \delta_{jk} 
 \frac{\partial {A_i^n}}{\partial \xi_n}  \right]
\nu_k D_j  \nonumber \\
& = (\tlambda+ \lambda)   \frac{\partial {A_i^n}}{\partial \xi_n} 
({\bf \nub \cdot {D}}) \nonumber  \\
& = \frac{1}{8\pi\mu} (2\lambda + 4\mu) ({\bf \nub \cdot {D}})
\left\{  \alpha
\frac{R_i}{R^3}  - (2- \alpha) \frac{3R_i  +R_i }{R^3} +
3 (2-\alpha) \frac{R_i R^2}{R^5}  \right\}  \nonumber \\
& = \frac{1}{8\pi\mu} (2\lambda + 4\mu) ({\bf \nub \cdot {D}})
(2\alpha-2) \frac{R_i}{R^3}  \nonumber \\
& = \frac{1}{2\pi} ({\bf \nub \cdot {D}}) \frac{-R_i}{R^3} =  
\frac{1}{2\pi} ({\bf \nub \cdot {D}}) 
\frac{\partial}{\partial x_i} \frac{1}{R} \, . 
\label{diffcomp}
\end{align}

This difference can be computed using a single call to
the FMM for the 
Laplace equation, since the result is simply the gradient of the 
field due to a point source with strength
$({\bf \nub \cdot {D}})$.

\begin{remark}
For those keeping careful track of indices, note that, using
the Okada notation, it is indeed the gradient that is required.
We have moved $x_3$ to $-x_3$ in the free space call. Thus,
\[ \frac{\partial}{\partial x_3} \frac{1}{R} =  -(-x_3 - \xi_3)/R^3 = (x_3 + \xi_3)/R^3 = -R_3/R^3 \, ,
\] 
justifying the last equality in (\ref{diffcomp}).
\end{remark}

The difficulty in developing a fast algorithm for the Mindlin solution,
however, lies not in handling the free space kernel $K_i^j$ or the 
simple image $A_i^j$. Rather, it lies in the kernels $B_i^j$ and 
$C_i^j$.

\section{The $B$ image}\label{Bformulas}

Ignoring the scaling factor $\frac{1}{4\pi\mu} \, 
\frac{1-\alpha}{\alpha}$, the components of displacement induced by the
$B$ image can be written in the form:

\[ u_i^B = \frac{F_i}{R+R_3} +  
\frac{R_i F_3}{R(R+R_3)} -  
\frac{\delta_{i3}(F_1R_1 + F_2R_2)}{R(R+R_3)} -  
\frac{R_i(1- \delta_{i3})(F_1R_1+F_2R_2)}{R(R+R_3)^2}.
\]

Without entering into a detailed derivation, Mindlin's basic
observation was that the $B$ image could be derived from a consideration
of all second derivatives of a scalar potential. More precisely, we
have the following lemma.

\begin{lemma} \label{l3.1}
Let $\Bs(R_1,R_2,R_3)$ denote the scalar potential
given by $\Bs(R_1,R_2,R_3) = R_3 \log(R+R_3) - R$.
Then
\[
 \Bs_1 = \frac{-R_1}{R+R_3}, \qquad 
 \Bs_2 = \frac{-R_2}{R+R_3}, \qquad
 \Bs_3 = \log(R+R_3), 
\]
\begin{align*}
 \Bs_{11} &= \frac{-1}{R+R_3} + \frac{R_1^2}{R(R+R_3)^2}, \qquad
 \Bs_{12} = \frac{R_1 R_2}{R(R+R_3)^2}, \qquad 
 \Bs_{13} = \frac{R_1}{R(R+R_3)}, \\
  \Bs_{22} &= \frac{-1}{R+R_3} + \frac{R_2^2}{R(R+R_3)^2}, \qquad
 \Bs_{23} = \frac{R_2}{R(R+R_3)}, \qquad \Bs_{33} = \frac{1}{R}  \, ,
\end{align*}
where the subscript denotes differentiation with respect to the corresponding
variable $R_i$.
It follows that the contribution
to the displacement induced by the $B$ image in the 
single-layer kernel is given by 
\[ (u_1^B,u_2^B,-u_3^B) = \nabla_{\bf x} [ (F_1,F_2,F_3) \cdot
\nabla_{\overline \xi} \Bs ] \, ,\]
where $\nabla_{\bf x}$ denotes the gradient with respect to the 
target location and $\nabla_{\overline \xi}$ denotes the gradient with respect to the 
image source location at $(\xi_1,\xi_2,-\xi_3)$. 
\end{lemma}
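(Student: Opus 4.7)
\medskip
\noindent\textbf{Proof proposal.} The plan is to verify the lemma by three direct, mechanical calculations, keeping careful track of the sign conventions introduced by the Okada variable $R_3 = -(x_3+\xi_3)$ and by the image source coordinates $(\xi_1,\xi_2,-\xi_3)$.

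First I would compute the nine partial derivatives. Since $R=\sqrt{R_1^2+R_2^2+R_3^2}$, the chain rule gives $\partial R/\partial R_i=R_i/R$. For $i=1,2$,
\[
\Bs_i \;=\; \frac{R_3}{R+R_3}\cdot\frac{R_i}{R}\;-\;\frac{R_i}{R}
\;=\;\frac{R_i R_3 - R_i(R+R_3)}{R(R+R_3)}\;=\;\frac{-R_i}{R+R_3},
\]
while for $i=3$ the $\log(R+R_3)$ term produces $R_3(1+R_3/R)/(R+R_3)=R_3/R$, which cancels the $-R_3/R$ coming from $-R$, leaving $\Bs_3=\log(R+R_3)$. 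Differentiating these three expressions once more is routine: for $\Bs_{ij}$ with $i,j\in\{1,2\}$ one uses the quotient rule on $-R_i/(R+R_3)$ together with $\partial(R+R_3)/\partial R_j=R_j/R+\delta_{j3}$; the pieces $\Bs_{i3}$ and $\Bs_{33}$ follow from the same identity. Each of the six formulas listed in the lemma falls out with a single line of algebra, so I would present these in a compact display rather than step-by-step.

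Next I would handle the displacement identity. The gradient with respect to the image source coordinates $\overline\xi=(\xi_1,\xi_2,-\xi_3)$ acts on $R_i$ through $\partial R_1/\partial\overline\xi_1=\partial R_2/\partial\overline\xi_2=-1$ and $\partial R_3/\partial\overline\xi_3=+1$, so
\[
(F_1,F_2,F_3)\cdot\nabla_{\overline\xi}\Bs \;=\; -F_1\Bs_1 - F_2\Bs_2 + F_3\Bs_3
\;=\;\frac{F_1R_1+F_2R_2}{R+R_3}+F_3\log(R+R_3).
\]
Call this scalar $\Phi$. Applying $\nabla_{\bx}$ uses $\partial R_k/\partial x_k=1$ for $k=1,2$ and $\partial R_3/\partial x_3=-1$, so $\partial\Phi/\partial x_k=\partial\Phi/\partial R_k$ for $k=1,2$ while $\partial\Phi/\partial x_3=-\partial\Phi/\partial R_3$. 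The target identities therefore reduce to $u_k^B=\partial\Phi/\partial R_k$ for $k=1,2$ and $u_3^B=\partial\Phi/\partial R_3$, which is precisely where the sign flip on $u_3^B$ in the statement comes from.

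Finally I would verify these three scalar equalities by substituting the second derivatives from the first step. For $k=1$, $\partial\Phi/\partial R_1=-F_1\Bs_{11}-F_2\Bs_{12}+F_3\Bs_{13}$ produces
\[
\frac{F_1}{R+R_3}+\frac{F_3 R_1}{R(R+R_3)}-\frac{R_1(F_1R_1+F_2R_2)}{R(R+R_3)^2},
\]
matching the given $u_1^B$ once one notes $\delta_{13}=0$ and $1-\delta_{13}=1$; the case $k=2$ is symmetric. For $k=3$, $\partial\Phi/\partial R_3=-F_1\Bs_{13}-F_2\Bs_{23}+F_3\Bs_{33}=-(F_1R_1+F_2R_2)/[R(R+R_3)]+F_3/R$, which matches $u_3^B$ after combining its first two terms $F_3/(R+R_3)+F_3R_3/[R(R+R_3)]=F_3/R$ and observing that the $(1-\delta_{33})$ factor kills the fourth term. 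The main obstacle in this proof is not any conceptual difficulty but bookkeeping: keeping the two different sign flips (the Okada convention on $R_3$ versus the image gradient $\nabla_{\overline\xi}$) from collapsing together. Organizing the calculation as above isolates each sign to exactly one place, so the remaining work is pure algebraic verification.
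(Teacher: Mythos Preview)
Your proposal is correct and matches what the paper intends: the lemma is stated there without proof, as a collection of identities to be checked by direct differentiation, and your organization---first the nine partials of $\Bs$, then the two chain-rule sign bookkeeping steps for $\nabla_{\overline\xi}$ and $\nabla_{\bx}$, then the termwise match against the displayed $u_i^B$---is exactly the mechanical verification the paper is implicitly deferring to the reader. Your isolation of the two separate sign flips (Okada's $R_3=-(x_3+\xi_3)$ versus the image-source gradient) is the only point where one could go wrong, and you have handled it cleanly.
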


\begin{remark} \label{slprmk}
Note that if $\Bs$ were the potential due to a simple charge source, 
then $(u_1^B,u_2^B,-u_3^B)$ would be the gradient of the potential
induced by a dipole with orientation and strength given by $(F_1,F_2,F_3)$.

\end{remark}

A straightforward but tedious calculation yields

\begin{lemma} \label{l3.2}
Let $\Bs(R_1,R_2,R_3) = R_3 \log(R+R_3) - R$. Then the contribution
to the displacement induced by the $B$ image in the 
double-layer kernel is given by 
\[ (u_1^B,u_2^B,-u_3^B) = \nabla_{\bf x} \Ds \, , \]
where 
\begin{align} \Ds &= 2\mu \left[ F_1\nu_1 \Bs_{\xi_1\xi_1} +
F_2\nu_2 \Bs_{\xi_2\xi_2} - F_3\nu_3 \Bs_{\xi_3\xi_3} +
(F_2\nu_1+F_1\nu_2) \Bs_{\xi_1\xi_2} \right]  \nonumber \\
&\qquad - 2\lambda ({\bf \nub \cdot {F}}) \Bs_{\xi_3\xi_3}.
\label{Bdlp}
\end{align}
\end{lemma}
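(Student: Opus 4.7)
\medskip\noindent\textbf{Proof plan.}
The strategy is to differentiate the single-layer representation from Lemma~\ref{l3.1} with respect to the source coordinates, apply the Lam\'e traction operator defined in (\ref{dlpkernel}), and recognize that the resulting displacement once again factors as a gradient in $\bx$ of a scalar.

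First I would rewrite the scalar of Lemma~\ref{l3.1} in terms of $R_i$-derivatives of $\Bs$. Using $\partial_{\xi_i}R_j=-\delta_{ij}$ (noting in particular that $R_3=-(x_3+\xi_3)$ gives $\partial_{\xi_3}R_3=-1$) together with $\nabla_{\overline{\xi}}=(\partial_{\xi_1},\partial_{\xi_2},-\partial_{\xi_3})$, the potential $\phi^B=(F_1,F_2,F_3)\cdot\nabla_{\overline{\xi}}\Bs$ simplifies to $\phi^B=-F_1\Bs_1-F_2\Bs_2+F_3\Bs_3=F_j\psi_j$, where $\psi_1=-\Bs_1$, $\psi_2=-\Bs_2$, $\psi_3=\Bs_3$. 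Lemma~\ref{l3.1} then reads $\chi_i u_i^B=\partial_{x_i}(F_j\psi_j)$ (no sum on $i$), with $\chi_1=\chi_2=1$ and $\chi_3=-1$, so the $B$-image contribution to the single-layer kernel is $B_i^j=\chi_i\,\partial_{x_i}\psi_j$.

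Next I would apply the traction operator in the source variable. Since $\partial/\partial x_i$ commutes with every $\partial/\partial\xi_k$, the factor $\chi_i\partial_{x_i}$ pulls outside, yielding $\chi_i u_i^{B,\mathrm{DL}}=\partial_{x_i}\Ds$ with
\[
\Ds=\lambda(\nub\cdot{\bf F})\,\partial_{\xi_n}\psi_n+\mu\,\nu_kF_j\bigl(\partial_{\xi_k}\psi_j+\partial_{\xi_j}\psi_k\bigr).
\]
The identity $\Bs_{\xi_i\xi_j}=\Bs_{ij}$ (again from $\partial_{\xi_i}R_j=-\delta_{ij}$) lets me rewrite everything in terms of $\xi$-derivatives of $\Bs$. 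Tabulating the nine entries of $\partial_{\xi_k}\psi_j$, the sign flip that distinguishes $\psi_3$ from $\psi_1,\psi_2$ forces every mixed $(i,3)$ or $(3,j)$ entry of the symmetrized matrix $\partial_{\xi_k}\psi_j+\partial_{\xi_j}\psi_k$ to cancel, leaving exactly the four shear terms displayed on the first line of (\ref{Bdlp}).

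The step I expect to be the main obstacle is the Lam\'e trace term. A direct expansion gives $\partial_{\xi_n}\psi_n=\Bs_{11}+\Bs_{22}-\Bs_{33}$, which is not manifestly $-2\Bs_{\xi_3\xi_3}$. To close the gap I would verify directly from the explicit formulas of Lemma~\ref{l3.1} that $\Bs$ is harmonic in $(R_1,R_2,R_3)$; the key cancellation uses $R_1^2+R_2^2=(R-R_3)(R+R_3)$ to collapse $\Bs_{11}+\Bs_{22}$ to $-1/R$, balancing $\Bs_{33}=1/R$. Harmonicity then reduces $\Bs_{11}+\Bs_{22}-\Bs_{33}$ to $-2\Bs_{33}=-2\Bs_{\xi_3\xi_3}$, producing the coefficient $-2\lambda(\nub\cdot{\bf F})\Bs_{\xi_3\xi_3}$ and completing the identification.
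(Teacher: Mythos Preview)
Your argument is correct. The paper itself offers no proof beyond the single line ``A straightforward but tedious calculation yields,'' so your plan is not merely consistent with the paper---it supplies the details the paper omits. The key moves you identify (pulling $\chi_i\partial_{x_i}$ outside the traction operator via Lemma~\ref{l3.1}, observing the cancellation in the $(i,3)$ entries of the symmetrized strain, and reducing the trace via harmonicity of $\Bs$) all check out; in particular your verification that $\Bs_{11}+\Bs_{22}=-1/R$ and hence $\partial_{\xi_n}\psi_n=-2\Bs_{\xi_3\xi_3}$ is exactly the identity needed to recover the $-2\lambda(\nub\cdot{\bf F})\Bs_{\xi_3\xi_3}$ term.
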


\begin{remark} \label{dlprmk}
Note that the formula for $\Ds$ in Lemma \ref{l3.2} is, in essence,
a quadrupole field of a $\Bs$-type source with specific second 
derivative contributions defined 
in (\ref{Bdlp}).
\end{remark}

\subsection{Far field and local representations for the $B$ image}

It is easy to verify that $\Bs$ is a scalar harmonic function in the lower half-space.
It is also clear, however, that it cannot describe the field due to a bounded
collection of charges, since $\Bs$ is growing as 
$R_3 \rightarrow \infty$. In this section, we describe some new far field
representations that are somewhat involved, but permit much more efficient
computation.

\begin{figure}[htbp]{$ $}
\centerline{\psfig{figure=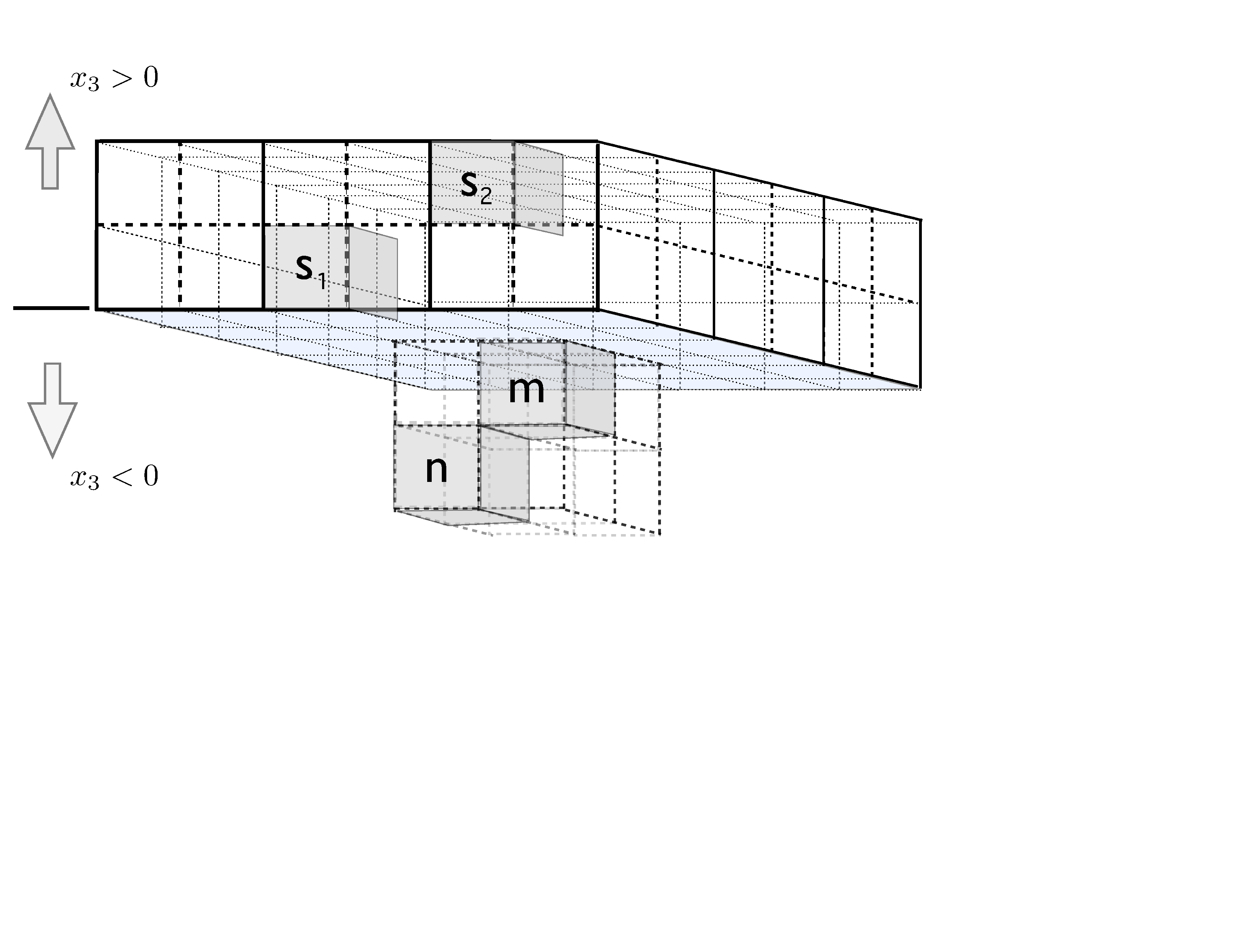,width=10cm}}

\vspace{-1in}

\caption{\sf The influence of the $\Bs$ (and $\Cs$) images 
in the boxes ${\bf s}_1$ and ${\bf s}_2$ needs to be computed at 
subsurface target locations in boxes 
${\bf n}$ and ${\bf m}$. In the FMM, this can be accomplished using 
a local expansion in the target boxes, an outgoing expansion in the 
source boxes, or both.
\label{downlist}}
\end{figure}

We begin by considering the $\Bs$-type sources contained in the 
boxes ${\bf s}_1$ 
and ${\bf s}_2$ in Fig. \ref{downlist}. They are separated from the 
target boxes ${\bf n}$ and ${\bf m}$ by at least one box length, so that
far field and/or local expansions should be rapidly convergent.
From above, the displacement in the lower half-space due to the image 
sources in, say, ${\bf s}_1$ is given by 
$(u_1^B,u_2^B,-u_3^B) = \nabla_{\bf x} \Phi_{\Bs}$, where the scalar $\Phi_{\Bs}$
is given by
\begin{equation}
\Phi_{\Bs} = \sum_{n=1}^N {\bf F}^{(n)} \cdot \nabla_{\overline \xi} 
\Bs(R^{(n)}_1,R^{(n)}_2,R^{(n)}_3),
\label{PhiBdef}
\end{equation}
where $N$ denotes the number of image sources in ${\bf s}_1$ and
$(R^{(n)}_1,R^{(n)}_2,R^{(n)}_3)$ denotes the vector from the $n$th
image source to the target point ${\bf x} = (x_1,x_2,x_3)$.

Within the box ${\bf n}$, however, the field $\Phi_{\Bs}$ is smooth and 
harmonic, and can be written in the form of a local expansion:

\begin{equation}
\Phi_{\Bs}(\xb) \approx \sum_{n=0}^p\sum_{m=-n}^n L_n^m \, 
	     Y_n^m(\theta,\phi) \, r^n ,
\label{local}
\end{equation}
with $(r,\theta,\phi)$ the spherical coordinates of $\xb$ with respect
to the box center of ${\bf n}$.
Here, 
$Y_n^m$ is the usual spherical harmonic of degree $n$ and order $m$
\begin{equation}
Y_n^m(\theta,\phi) =
\sqrt{\frac{2n+1}{4 \pi}}
\sqrt{\frac{(n-|m|)!}{(n+|m|)!}}\, P_n^{|m|}(\cos \theta)
		      e^{i m \phi},
\label{ynmpnm}
\end{equation}
where the associated Legendre functions $P_n^m$ are 
defined by the Rodrigues' formula
\[ P_n^m(x) = (-1)^m (1-x^2)^{m/2} \frac{d^m}{dx^m} P_n(x), \] 
and $P_n(x)$ is the Legendre polynomial of degree $n$.

The coefficients of the local expansion can be computed by projection
onto the spherical harmonic basis (integrating over the surface of
a sphere enclosing the box ${\bf n}$ and centered at the box center).
That is,
\begin{equation}
L_n^m= r^{-n} \, \int_0^{\pi} \int_0^{2\pi} Y_n^{-m}(\theta,\phi) 
\Phi_{\Bs}(r,\theta,\phi) \, d\phi \, d\theta.
\label{coeff}
\end{equation}
This can be carried out in $O(p^3)$ work, where $p$ is the order of the 
expansion in (\ref{local}) by using a tensor product grid with 
$2p$ Gauss-Legendre nodes in the $\theta$ variable and $2p$ equispaced
nodes in the $\phi$ variable.

In order to develop a more efficient fast algorithm, however, we would
like to have outgoing representations from the source box
${\bf s}_1$ that can make use of the full framework of the FMM
\cite{fmmadap,fmmActa}.
One such representation is based on the
plane wave formula (\cite{MF}, p. 1256) for the potential
at a target $(x_1,x_2,x_3)$
due to a simple charge source at $(\xi_1,\xi_2,-\xi_3)$:
\begin{eqnarray}
 \frac{1}{\sqrt{(x_1-\xi_1)^2+(x_2-\xi_2)^2+(x_3+\xi_3)^2}} 
\hspace{1.5in} \nonumber \\
 = \frac{1}{2\pi} \int_0^\infty e^{\sigma (x_3+\xi_3)} \int_0^{2 \pi}
e^{ i \sigma ((x_1-\xi_1) \cos \alpha + (x_2-\xi_2) \sin \alpha)} 
d\alpha \, d\sigma , 
\label{exprep} 
\end{eqnarray}
valid for $x_3,\xi_3 < 0$. 

The following theorem provides an expression for the displacement
induced by single and double-layer sources in terms of plane waves 
(that is, complex exponentials
of the components $(x_1,x_2,x_3)$).

\begin{theorem} \label{t3.1}
Let $(u_1^B,u_2^B,u_3^B)$ denote the displacement induced by a 
single-layer force vector $(F_1,F_2,F_3)$
located at the image source 
$(\xi_1,\xi_2,-\xi_3)$ that lies in a source box ${\bf s}$
centered at $(S_1,S_2,S_3)$.
Then
\[
u_i^B = \frac{1}{2\pi} 
\int_0^\infty {e^{\sigma (x_3-S_3)}} \int_0^{2 \pi}
e^{ i \sigma ((x_1-S_1) \cos \alpha + (x_2-S_2) \sin \alpha)} 
M_i(\alpha) W(\sigma,\alpha) \, d\alpha \, d\sigma,  
\]
where 
\[ W(\sigma,\alpha) = 
\left( -F_1 i \cos \alpha - F_2 i \sin \alpha + F_3 \right)
e^{\sigma (i(S_2 - \xi_2) + i(S_1 - \xi_1) + (S_3 + \xi_3))}
\]
and 
\begin{equation}
M_1(\alpha) = i \cos \alpha,\qquad
M_2(\alpha) = i \sin \alpha, \qquad M_3(\alpha) = -1.
\label{Sdef}
\end{equation}
\end{theorem}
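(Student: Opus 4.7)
The strategy is to start from Lemma~\ref{l3.1}, which identifies the displacement as $(u_1^B,u_2^B,-u_3^B) = \nabla_{\bf x}\Psi$ for a scalar potential $\Psi = (F_1,F_2,F_3)\cdot\nabla_{\overline\xi}\Bs$, and combine it with the classical plane-wave formula (\ref{exprep}) for $1/R = \Bs_{33}$. Although $\Bs$ itself grows logarithmically at infinity and admits no plane-wave integral representation (any formal one would carry a $1/\sigma^2$ weight that diverges at $\sigma=0$), once the outer gradient $\nabla_{\bf x}$ is applied, each component of $u_i^B$ becomes a linear combination of \emph{second} derivatives $\Bs_{jk}$, and those are well behaved.

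First I would unfold Lemma~\ref{l3.1}. Using $R_j = x_j - \xi_j$ for $j=1,2$ and $R_3 = -(x_3+\xi_3)$, a short chain-rule calculation (the signs coming from differentiating $\Bs$ with respect to the image coordinate $-\xi_3$, and from $\partial_{R_3} = -\partial_{x_3}$) yields
\[
u_i^B \;=\; -F_1\,\Bs_{i1} \;-\; F_2\,\Bs_{i2} \;+\; F_3\,\Bs_{i3}, \qquad i = 1,2,3,
\]
with the ``$-u_3^B$'' on the left of Lemma~\ref{l3.1} already absorbed into the sign of the $i=3$ row. Next, starting from (\ref{exprep}), I would differentiate under the integral sign: each $\partial_{x_j}$ brings down a factor $i\sigma\cos\alpha$, $i\sigma\sin\alpha$, or $\sigma$ for $j = 1,2,3$. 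After dividing by the appropriate signed power of $\sigma$ to convert $\partial_{x_j}$ into $\partial_{R_j}$, the three coordinate directions acquire precisely the symbols $M_1 = i\cos\alpha$, $M_2 = i\sin\alpha$, $M_3 = -1$ of (\ref{Sdef}), and the identity
\[
\Bs_{jk} \;=\; \frac{1}{2\pi}\int_0^\infty\!\int_0^{2\pi} M_j(\alpha)\,M_k(\alpha)\,\mathcal{E}_0 \,d\alpha\,d\sigma, \qquad \mathcal{E}_0 \;=\; e^{\sigma(x_3+\xi_3) + i\sigma((x_1-\xi_1)\cos\alpha + (x_2-\xi_2)\sin\alpha)},
\]
emerges for every pair $(j,k)$. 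Substituting this back into the expression for $u_i^B$ collapses the three-term sum: the outer index $M_i(\alpha)$ pulls out, and the $F_j$ pair with $M_j$ inside to produce the scalar combination $-F_1 i\cos\alpha - F_2 i\sin\alpha + F_3$.

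To recast the result in the form stated by the theorem, I would finally split $\mathcal{E}_0$ about the box center by writing $x_j - \xi_j = (x_j - S_j) + (S_j - \xi_j)$ for $j=1,2$ and $x_3 + \xi_3 = (x_3 - S_3) + (S_3 + \xi_3)$ in the three terms of the exponent. The target-dependent factor $e^{\sigma(x_3-S_3) + i\sigma((x_1-S_1)\cos\alpha + (x_2-S_2)\sin\alpha)}$ is precisely the exponential appearing in the theorem, and the residual source-dependent factor combines with $-F_1 i\cos\alpha - F_2 i\sin\alpha + F_3$ to give $W(\sigma,\alpha)$.

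The main obstacle is establishing the plane-wave identity for the purely transverse second derivatives $\Bs_{11}$, $\Bs_{12}$, $\Bs_{22}$: unlike $\Bs_{i3}$ and $\Bs_{33}$, these do not arise from bringing down extra $\sigma$-factors, so their representation corresponds to dividing the integrand of (\ref{exprep}) by $\sigma^2$ (two antiderivatives in $x_3$). This is legitimate because the quadratic symbols $\cos^2\alpha$, $\cos\alpha\sin\alpha$, $\sin^2\alpha$ are bounded and produce no $1/\sigma$ singularity at the origin, but the cleanest verification is a direct evaluation using $\int_0^{2\pi}\cos^n(\alpha-\phi)\,e^{iz\cos(\alpha-\phi)}\,d\alpha = 2\pi i^n J_n(z)$ together with the Laplace transform $\int_0^\infty e^{-R_3\sigma}J_n(\rho\sigma)\,d\sigma = (R-R_3)^n/(\rho^n R)$, which after algebraic simplification reproduces the explicit expressions of Lemma~\ref{l3.1}.
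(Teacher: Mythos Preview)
Your argument is correct and follows precisely the route the paper itself sketches in Remark~\ref{r3.3}: write $\Bs = \partial_{x_3}^{-2}(1/R)$, apply the plane-wave formula (\ref{exprep}), and observe that the two outer derivatives coming from $\nabla_{\bf x}\nabla_{\overline\xi}$ cancel the formal $1/\sigma^2$. The paper does not give a detailed proof; it only asserts that the result follows ``by Fourier analysis and contour deformation'' as in \cite{MF}, and then offers the $\partial_{x_3}^{-2}$ heuristic in Remark~\ref{r3.3} with the caveat that it ``is not entirely rigorous, but can be made so.'' Your proposal is exactly that rigorization: you bypass the divergent intermediate step by working directly with the second derivatives $\Bs_{jk}$, and you close the gap for the purely transverse cases $\Bs_{11},\Bs_{12},\Bs_{22}$ by evaluating the $\alpha$-integral in closed form via Bessel functions and then the $\sigma$-integral via a Laplace transform. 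This is arguably cleaner than redoing the contour-deformation argument of \cite{MF} from scratch, since it leverages (\ref{exprep}) as a black box.

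One small correction: the identity you quote, $\int_0^{2\pi}\cos^n(\alpha-\phi)\,e^{iz\cos(\alpha-\phi)}\,d\alpha = 2\pi i^n J_n(z)$, holds with $\cos(n\beta)$ in place of $\cos^n\beta$. For the quadratic symbols you actually need, write $\cos^2\beta = \tfrac{1}{2}(1+\cos 2\beta)$, $\sin^2\beta = \tfrac{1}{2}(1-\cos 2\beta)$, $\cos\beta\sin\beta = \tfrac{1}{2}\sin 2\beta$, and then invoke the standard Bessel integrals; the Laplace transform $\int_0^\infty e^{-R_3\sigma}J_n(\rho\sigma)\,d\sigma = (R-R_3)^n/(\rho^n R)$ you cite is correct and does the rest.
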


Theorem \ref{t3.1} can be proven by Fourier analysis and contour
deformation, as in the derivation of the representation 
(\ref{exprep}) in \cite{MF}. 

\begin{remark} \label{r3.3}
Alternatively, we recall that $\Bs_{33} = \frac{1}{R}$. We
may write this relation in the form 
\begin{equation}
 \Bs = \partial_{x_3}^{-2} \left( \frac{1}{R} \right)  \, .
\label{dzminus2}
\end{equation}
It is straightforward to check that constants of integration can be 
ignored since they would only permit linear functions of 
$x_1$, $x_2$, and $x_3$ to appear in $\Bs$ and these are annihilated 
by the second derivative operators which arise in computing the 
displacement, according to Lemma \ref{l3.1}.
Note now that the operator $\partial_{x_3}^{-2}$ 
corresponds in (\ref{exprep}) to 
division by $\sigma^2$. This results in a divergent integral,
but using Lemma \ref{l3.1} again, 
the displacement clearly corresponds to 
multiplication
by a factor of either $\sigma \cos \alpha$, 
$\sigma \sin \alpha$, or $\sigma$ 
(the signatures of $\partial_{x_1}$, $\partial_{x_2}$, and
$\partial_{x_3}$, respectively).
This argument, of course, is not entirely rigorous,
but can be made so. 
\end{remark}

By superposition, we obtain a plane wave expansion for the 
field due to a set of sources, summarized in the following lemma.

\begin{lemma} \label{l3.3}
Let $(u_1^B,u_2^B,u_3^B)$ denote the displacement induced by a 
collection of single-layer force vectors 
$\{(F^n_1,F^n_2,F^n_3), n = 1,\dots,N \}$
at image source locations
\[ 
    \{ (\xi^n_1,\xi^n_2,-\xi^n_3), n = 1,\dots,N \},
\]
lying in a source box ${\bf s}$
centered at $(S_1,S_2,S_3)$.
Then the components of displacement are given by the plane
wave representation of Theorem \ref{t3.1}, with
\[ W(\sigma,\alpha) =  \sum_{n=1}^N
\left( -F^n_1 i \cos \alpha - F^n_2 i \sin \alpha + F^n_3 \right)
e^{\sigma (i(S_2 - \xi^n_2) + i(S_1 - \xi^n_1) + (S_3 + \xi^n_3))} .
\]
\end{lemma}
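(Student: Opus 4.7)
The plan is to derive Lemma \ref{l3.3} as an immediate corollary of Theorem \ref{t3.1} by linear superposition. The key observation is that the half-space elastostatic problem is linear in the source data, so the displacement induced by a collection of force vectors is simply the sum of the displacements induced by each force vector individually.

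First, I would apply Theorem \ref{t3.1} separately to each single-layer force vector $(F^n_1, F^n_2, F^n_3)$ located at its image source $(\xi^n_1, \xi^n_2, -\xi^n_3)$, relative to the common box center $(S_1, S_2, S_3)$. This yields, for each $n$, a plane wave representation for the contribution $(u_1^{B,n}, u_2^{B,n}, u_3^{B,n})$ with amplitude
\[
W_n(\sigma,\alpha) = \left(-F^n_1 i \cos\alpha - F^n_2 i \sin\alpha + F^n_3\right)
e^{\sigma(i(S_2-\xi^n_2) + i(S_1-\xi^n_1) + (S_3+\xi^n_3))}.
\]
Note that the prefactors $e^{\sigma(x_3-S_3)}$ and $e^{i\sigma((x_1-S_1)\cos\alpha + (x_2-S_2)\sin\alpha)}$ and the angular factors $M_i(\alpha)$ from Theorem \ref{t3.1} depend only on the target and the box center, not on the individual source location, so these factor out of the sum.

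Next, I would sum over $n$ and interchange the finite sum with the double integral over $\sigma$ and $\alpha$. Since the sum is finite, this interchange is unconditionally justified, and the linearity of the integral gives the total displacement as the plane wave integral of Theorem \ref{t3.1} with $W(\sigma,\alpha) = \sum_{n=1}^N W_n(\sigma,\alpha)$, which is exactly the claimed formula.

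There is no substantive obstacle here: the entire content of the lemma is the observation that the $n$-dependence of Theorem \ref{t3.1}'s amplitude $W$ lives entirely in a single multiplicative factor, so that superposition collapses to replacing $W$ by the sum $\sum_n W_n$ while leaving the rest of the representation intact. The only minor point worth remarking on in the proof is that the shift of phase to the box center $(S_1,S_2,S_3)$ — rather than to each individual source location — is what makes the outgoing representation useful in the FMM context, since all $N$ sources in ${\bf s}$ are then compressed into a single amplitude function $W(\sigma,\alpha)$.
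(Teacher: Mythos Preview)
Your proposal is correct and matches the paper's approach exactly: the paper introduces Lemma \ref{l3.3} with the single sentence ``By superposition, we obtain a plane wave expansion for the field due to a set of sources,'' and provides no further argument. Your write-up simply spells out what that superposition step entails.
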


A plane wave expansion can be obtained for the double-layer kernel
as well. The proof is analogous.

\begin{theorem} \label{t3.2}
Let $(u_1^B,u_2^B,u_3^B)$ denote the displacement induced by a 
double-layer force vector ${\bf D} = (D_1,D_2,D_3)$ with orientation
vector ${\bf \nub} = (\nu_1,\nu_2,\nu_3)$,
located at the image source 
$(\xi_1,\xi_2,-\xi_3)$ that lies in a source box ${\bf s}$
centered at $(S_1,S_2,S_3)$.
Then,
\[
u_i^B = \frac{1}{2\pi} 
\int_0^\infty {e^{\sigma (x_3-S_3)}} \int_0^{2 \pi}
e^{ i \sigma ((x_1-S_1) \cos \alpha + (x_2-S_2) \sin \alpha)} 
M_i(\alpha) W(\sigma,\alpha) \, d\alpha \, d\sigma,
\]
where 
\begin{align*} W(\sigma,\alpha) &=  
 \sigma e^{\sigma (i(S_2 - \xi_2) + i(S_1 - \xi_1) + (S_3 + \xi_3))}
\, 
( 2\mu [ D_1 \nu_1 \cos^2 \alpha + D_2 \nu_2 \sin^2 \alpha 
  \\
 &\qquad - D_3\nu_3 + (D_2 \nu_1 + D_1\nu_2) \sin \alpha \cos \alpha 
] - 2 \lambda ({\bf \nub} \cdot {\bf D})  ),
\end{align*}
and the $M_i(\alpha)$ are defined in (\ref{Sdef}).
\end{theorem}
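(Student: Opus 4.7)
The proof runs in parallel with Theorem~\ref{t3.1}, with Lemma~\ref{l3.2} replacing Lemma~\ref{l3.1} as the starting point. I would first invoke Lemma~\ref{l3.2} to write the double-layer $B$-contribution in the form $(u_1^B,u_2^B,-u_3^B)=\nabla_{\bf x}\Ds$, with $\Ds$ as in (\ref{Bdlp}): a quadrupole-type linear combination of $\Bs_{\xi_i\xi_j}$ whose coefficients encode $\D$, $\nub$, $\lambda$, and $\mu$.

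I would then apply the plane wave viewpoint of Remark~\ref{r3.3}. Because $\Bs_{33}=1/R$ and the constants of integration hidden in (\ref{dzminus2}) are annihilated by the subsequent second derivatives, $\Bs$ carries the plane wave symbol of $1/R$ from (\ref{exprep}) weighted by an additional $1/\sigma^2$. The derivatives $\partial/\partial\xi_k$ act on the integrand of (\ref{exprep}) by multiplication by $-i\sigma\cos\alpha$, $-i\sigma\sin\alpha$, and $\sigma$ for $k=1,2,3$ respectively; hence each second derivative $\Bs_{\xi_i\xi_j}$ produces a symbol proportional to $\sigma^2$ times a trigonometric polynomial in $\alpha$, exactly cancelling the formal $1/\sigma^2$ weight of $\Bs$. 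Collecting the four distinct index-pair contributions in (\ref{Bdlp}) reproduces the trigonometric bracket $\{2\mu[D_1\nu_1\cos^2\alpha+\cdots]-2\lambda(\nub\cdot\D)\}$ appearing inside $W(\sigma,\alpha)$.

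Finally, the outer $\nabla_{\bf x}$ brings down a factor $\sigma M_i(\alpha)$ on the $i$-th component of the displacement, with $M_i$ as in (\ref{Sdef}). Pulling $M_i(\alpha)$ outside the bracket as in Theorem~\ref{t3.1} leaves the explicit $\sigma$ prefactor that appears in the theorem's $W(\sigma,\alpha)$, while recentering the expansion from the image source $(\xi_1,\xi_2,-\xi_3)$ to the box center $(S_1,S_2,S_3)$ produces the translation exponential exactly as in Theorem~\ref{t3.1}.

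The main obstacle is that the intermediate symbol $1/\sigma^2$ for $\Bs$ corresponds to a divergent $\sigma$-integral, so the formal computation sketched in Remark~\ref{r3.3} must be made rigorous. I would handle this either by mimicking the Fourier-analytic contour argument that produces (\ref{exprep}) in \cite{MF}, applying the $\xi$- and $x$-derivatives under the integral before carrying out the $\sigma$-integration, or, more directly, by working throughout with the convergent plane wave representations of the differentiated quantities $\Bs_{\xi_i\xi_j}$ (whose symbols are bounded trigonometric polynomials in $\alpha$) rather than with $\Bs$ itself.
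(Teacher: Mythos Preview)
Your proposal is correct and follows exactly the route the paper has in mind: the paper itself says only that ``the proof is analogous'' to Theorem~\ref{t3.1}, and your sketch spells out that analogy precisely---invoking Lemma~\ref{l3.2} in place of Lemma~\ref{l3.1}, reading off the plane-wave symbols of the second $\xi$-derivatives of $\Bs$ via Remark~\ref{r3.3} (so that the formal $1/\sigma^2$ is cancelled and an extra factor of $\sigma$ survives in $W$), and then applying the outer $\nabla_{\bf x}$ to produce the $M_i(\alpha)$ factors. Your closing remark about making the $\partial_{x_3}^{-2}$ step rigorous by working directly with the convergent representations of $\Bs_{\xi_i\xi_j}$, or by differentiating under the integral before the contour-deformation argument of \cite{MF}, is exactly the justification the paper alludes to.
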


\begin{lemma} \label{l3.4}
Let $(u_1^B,u_2^B,u_3^B)$ denote the displacement induced by a 
collection of double-layer force vectors 
$\{{\bf D}^n = (D^n_1,D^n_2,D^n_3), n = 1,\dots,N \}$
at image source locations
$\{ (\xi^n_1,\xi^n_2,-\xi^n_3), n = 1,\dots,N \}$
with orientation vectors 
$\{ {\bf \nub}^n =(\nu^n_1,\nu^n_2,\nu^n_3), n = 1,\dots,N \}$
lying in a source box ${\bf s}$
centered at $(S_1,S_2,S_3)$.
Then the components of displacement are given by the plane
wave representation of Theorem \ref{t3.2}, with
\begin{align*} W(\sigma,\alpha) &=  \sum_{n=1}^N 
 \sigma e^{\sigma (i(S_2 - \xi^n_2) + i(S_1 - \xi^n_1) + (S_3 + \xi^n_3))}
\, 
( 2\mu [ D^n_1 \nu^n_1 \cos^2 \alpha + D^n_2 \nu^n_2 \sin^2 \alpha 
  \\
 &\qquad - D^n_3\nu^n_3 + (D^n_2 \nu^n_1 + D^n_1\nu^n_2)
 \sin \alpha \cos \alpha 
] - 2 \lambda ({\nub^n} \cdot {\bf D}^n)  ).
\end{align*}
\end{lemma}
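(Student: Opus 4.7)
The plan is to deduce Lemma \ref{l3.4} from Theorem \ref{t3.2} by linear superposition. The Mindlin half-space displacement field is linear in the prescribed double-layer data $(\mathbf{D}^n,\nub^n)$, so the displacement induced by the whole collection of sources is the sum over $n$ of the single-source displacements supplied by Theorem \ref{t3.2}.

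First I would apply Theorem \ref{t3.2} separately to each $n=1,\dots,N$ with source location $(\xi_1^n,\xi_2^n,-\xi_3^n)$, dislocation $\mathbf{D}^n$, and orientation $\nub^n$, obtaining
\[
u_i^{B,(n)}(\bx) = \frac{1}{2\pi}\int_0^\infty e^{\sigma(x_3-S_3)}\int_0^{2\pi} e^{i\sigma((x_1-S_1)\cos\alpha + (x_2-S_2)\sin\alpha)}\, M_i(\alpha)\, W^{(n)}(\sigma,\alpha)\, d\alpha\, d\sigma,
\]
where $W^{(n)}(\sigma,\alpha)$ is the weight attached to the $n$th source. The crucial observation is that every factor multiplying $W^{(n)}$ depends only on the target $\bx$, the common box center $(S_1,S_2,S_3)$, and the integration variables $(\sigma,\alpha)$, and is therefore independent of $n$. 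This is precisely what lets the sum commute with the plane-wave envelope.

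Next I would sum $u_i^{B,(n)}$ over $n=1,\dots,N$ and interchange the finite sum with the two integrals, yielding a single plane-wave integral of exactly the form in Theorem \ref{t3.2} but with aggregate weight $W(\sigma,\alpha)=\sum_{n=1}^N W^{(n)}(\sigma,\alpha)$. Reading $W^{(n)}$ off from the formula in Theorem \ref{t3.2} — substituting $\xi_k\mapsto\xi_k^n$, $D_j\mapsto D_j^n$, $\nu_k\mapsto\nu_k^n$ — and summing over $n$ produces exactly the $W(\sigma,\alpha)$ displayed in the lemma statement.

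There is no genuine obstacle: the argument is pure bookkeeping of linear superposition. The only point worth a sentence of comment is the interchange of summation and integration, which is immediate because the sum has only $N$ terms and each individual integral converges by Theorem \ref{t3.2}; no dominated convergence or Fubini estimate is required. The proof will therefore be a short paragraph rather than a detailed computation.
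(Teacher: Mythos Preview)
Your proposal is correct and matches the paper's approach exactly: the paper states Lemma~\ref{l3.4} immediately after Theorem~\ref{t3.2} with no separate proof, relying on the same superposition argument it spelled out just before Lemma~\ref{l3.3} (``By superposition, we obtain a plane wave expansion for the field due to a set of sources\dots''). Your observation that the envelope factors are $n$-independent so the finite sum passes through the integrals is precisely the content of that one-line justification.
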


Quadratures have been developed for these plane wave formulas 
in \cite{fmmActa,YARVIN}, valid so long as the source and target boxes are
separated in the $x_3$-direction by at least one intervening box length.
Referring to Fig. \ref{downlist}, ${\bf s}_1$ and ${\bf s}_2$
are well separated from ${\bf n}$ but only 
${\bf s}_2$ is well separated from ${\bf m}$. It is demonstrated in
\cite{fmmActa} that 3 digits of accuracy can be achieved with
about 100 plane waves,  
6 digits can be achieved with
about 560 plane waves,  
and 10 digits can be achieved with
about 1800 plane waves.

More concretely, suppose we wish to enforce a maximum error of $10^{-6}$.
Given a well-separated image source $Q=(\xi_1,\xi_2,-\xi_3)$ and target
$P=(x_1,x_2,x_3)$, we have
\begin{equation}
 \frac{1}{\|P-Q\|} \approx
 \sum_{k=1}^{18} \frac{w_k}{M(k)} \sum_{j=1}^{M(k)} 
e^{\sigma_k [(x_3+\xi_3) - i (x_1-\xi_1) \cos \alpha_j - 
i(x_2-\xi_2) \sin \alpha_j ]} , 
\label{disc18}
\end{equation}
where $\alpha_j = 2\pi j/M(k)$, and the 
weights $\{w_k\}$, nodes $\{ \sigma_k \}$
and values $\{ M(k) \}$ are given in 
Table \ref{tab2}.
(The total number of exponentials required is $558$.)
The weights and nodes $\{w_k,\sigma_k\}$ correspond to
a discretization of the 
outer integral in (\ref{exprep}). The
inner integral in (\ref{exprep}) is discretized using the 
trapezoidal rule with $M(k)$ nodes. The quadratures are designed
under the assumption that
$1 \leq |x_3+\xi_3| \leq 4$ and
$|x_1-\xi_1|,|x_2-\xi_2|  \leq 4$. 
This corresponds to their usage in the fast multipole method, 
where by convention, boxes at every level of the FMM 
hierarchy are rescaled to have unit size \cite{fmmadap,fmmActa}.

\begin{table}
\caption{Columns 1 and 2 contain the eighteen
weights and nodes for discretization of the outer integral in 
(\ref{exprep}) at six digit accuracy. Column 3 contains the 
number of discretization points needed in the inner integral,
denoted by $M(k)$ (From \cite{fmmActa}).
\label{tab2}}
\begin{center} 
\begin{tabular}{|r|r|r|}
\hline 
$Node\hspace{.5in}$ & $Weight\hspace{.5in}$ & $M(k)$ \\
\hline 
$0.05278852766117$  &
$0.13438265914335$ &
5 \\
$0.26949859838931$  &
$0.29457752727395$ &
8 \\
$0.63220353174689$ &
$0.42607819361148$ &
12 \\
$1.11307564277608$ &
$0.53189220776549$ &
16 \\
$1.68939496140213$ &
$0.61787306245538$ &
20 \\
$2.34376200469530$ &
$0.68863156078905$ &
25 \\
$3.06269982907806$ &
$0.74749099381426$ &
29 \\
$3.83562941265296$ &
$0.79699192718599$ &
34 \\
$4.65424734321562$ &
$0.83917454386997$ &
38 \\
$5.51209386593581$ &
$0.87570092283745$ &
43 \\
$6.40421268377278$ &
$0.90792943590067$ &
47 \\
$7.32688001906175$ &
$0.93698393742461$ &
51 \\
$8.27740099258238$ &
$0.96382546688788$ &
56 \\
$9.25397180602489$ &
$0.98932985769673$ &
59 \\
$10.25560272374640$ &
$1.01438284597917$ &
59 \\
$11.28208829787774$ &
$1.04003654374165$ &
51 \\
$12.33406790967692$ &
$1.06815489269567$ &
4 \\
$13.41492024017240$ &
$1.10907580975537$ &
1 \\
\hline 
\end{tabular}
\end{center}
\end{table}

The reason for seeking a plane wave representation for the displacement
due to a collection of sources is that {\em translation} of information
from a source box to a target box is a diagonal procedure.

\begin{lemma} {\bf (Diagonal translation)} [Adapted from \cite{fmmActa}]
\label{l3.5}
Let ${\bf s}$ be a box centered at $(S_1,S_2,S_3)$ containing
$N$ single and/or double-layer sources 
and let ${\bf n}$ be a well-separated
target box centered at $(N_1,N_2,N_3)$.
Suppose a plane wave expansion for the displacement takes the form
\begin{equation}
  u_i({ P}) = \sum_{k=1}^{s(\varepsilon)} \sum_{j=1}^{M(k)}
  M_i(\alpha_j) W(k,j) e^{\sigma_k (x_3-S_3)} 
 e^{i\sigma_k( (x_1-S_1) \cos \alpha_j + (x_2-S_2) \sin \alpha_j )}, 
\end{equation}
for $P = (x_1,x_2,x_3) \in {\bf n}$.
Then 
\begin{equation}
  u_i({ P}) = \sum_{k=1}^{s(\varepsilon)} \sum_{j=1}^{M(k)}
  V(k,j) e^{\sigma_k (x_3-N_3)} 
 e^{i\sigma_k( (x_1-N_1) \cos \alpha_j + (x_2-N_2) \sin \alpha_j )} \, , 
\end{equation}
where
\begin{equation}
 V(k,j) = W(k,j) \, e^{\sigma_k (N_3-S_3)}
 e^{i\sigma_k( (N_1-S_1) \cos \alpha_j + (N_2-S_2) \sin \alpha_j )}. 
\label{diagtrans}
\end{equation}
\end{lemma}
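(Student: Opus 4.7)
The plan is to prove Lemma \ref{l3.5} by direct algebraic rearrangement of the exponentials, exploiting the fact that plane waves are eigenfunctions of translation. I would not need to invoke any integral representation or deeper analytic fact; the entire content of the lemma is the multiplicative (``diagonal'') behavior of exponentials under shift of the expansion center.

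First I would take the given expansion about $(S_1,S_2,S_3)$ and, for each fixed $(k,j)$, split the exponents by writing
\[
x_3 - S_3 = (x_3 - N_3) + (N_3 - S_3), \qquad x_\ell - S_\ell = (x_\ell - N_\ell) + (N_\ell - S_\ell) \quad (\ell=1,2).
\]
Substituting these decompositions into
\[
e^{\sigma_k(x_3-S_3)} \, e^{i\sigma_k((x_1-S_1)\cos\alpha_j + (x_2-S_2)\sin\alpha_j)}
\]
and using $e^{a+b} = e^a e^b$, the exponential factorizes as the product of
\[
e^{\sigma_k(x_3-N_3)} \, e^{i\sigma_k((x_1-N_1)\cos\alpha_j + (x_2-N_2)\sin\alpha_j)}
\]
with the constant (in $P$) factor
\[
e^{\sigma_k(N_3-S_3)} \, e^{i\sigma_k((N_1-S_1)\cos\alpha_j + (N_2-S_2)\sin\alpha_j)}.
\]

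Second, I would absorb the latter constant factor into $W(k,j)$, defining $V(k,j)$ by precisely the formula (\ref{diagtrans}). Because this reweighting is independent of $P$ and involves no coupling between different indices $(k,j)$ and $(k',j')$, the reorganized sum has exactly the form claimed in the lemma, with the same angular factors $M_i(\alpha_j)$ attached. This establishes both the translated expansion and the explicit formula for $V(k,j)$.

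There is essentially no serious obstacle here: the lemma is an identity for exponentials rather than an approximation result, so the derivation is mechanical once one observes that the $(k,j)$ modes decouple. The only minor points worth flagging in the write-up are that (i) the angular factors $M_i(\alpha_j)$ depend on $\alpha_j$ alone and are therefore unaffected by the recentering, and (ii) the finite truncation order $s(\varepsilon)$ and the inner quadrature count $M(k)$ are inherited unchanged, so no additional error is introduced by the shift itself; the translation is exact relative to the source-centered expansion.
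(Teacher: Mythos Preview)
Your argument is correct and is exactly the standard one: the lemma is an exact algebraic identity following from $e^{a+b}=e^a e^b$ applied termwise, with no coupling between modes. The paper does not supply its own proof of this lemma---it is stated without proof and attributed to \cite{fmmActa}---so there is nothing to compare against; your derivation is the intended content. One small point worth noting in your write-up: as printed, the second displayed expansion in the lemma omits the factor $M_i(\alpha_j)$ that appears in the first, while the definition of $V(k,j)$ in (\ref{diagtrans}) does not absorb it either. Your proof makes clear that $M_i(\alpha_j)$ should persist (or equivalently be absorbed into a component-dependent $V_i(k,j)$), so you may want to flag this as a typographical slip in the statement rather than silently correct it.
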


In the FMM, it is convenient to convert the plane wave expansion to a
local expansion in spherical harmonics of the form (\ref{local})
within a target box. For this, suppose we have the translated plane wave
expansion centered in the target box. Then, for the $i$th component
of displacement $u_i(P)$, we have
\cite{fmmActa}:
\begin{equation}
L_n^m = \frac{ (-i)^{|m|}}{\sqrt{(n-m)!(n+m)!}} \, 
 \sum_{k=1}^{s(\varepsilon)} (-\sigma_k)^n \, \sum_{j=1}^{M(k)}
  M_i(\alpha_j) W(k,j) e^{i m \alpha_j}.
\label{xtm}
\end{equation}

\subsection{An alternative representation}

While the representation in terms of plane waves above is, in a substantial
sense, optimal, it requires that the source and target
boxes be separated 
by a box length in the $x_3$-direction.
(In Fig.  \ref{downlist}, this condition fails for the interaction
of image source box ${\bf s}_1$ with target box ${\bf m}$.)
For these, we need either to make use of (\ref{PhiBdef}) and (\ref{local}) or
to find another far field representation.

One option would be to compute an equivalent density of $\Bs$-type sources
on the surface of a sphere enclosing the source box ${\bf s}_1$ as in
``kernel-independent" FMMs \cite{fongdarve,zorin1}. This is difficult to do
efficiently here since the kernel is not translation-invariant in $x_3$.

By combining the multipole expansion induced
by a collection of standard dipoles located at the 
$B$ image locations with dipole vectors $(F_1,F_2,F_3)$ with 
the formula (\ref{dzminus2}), it is easy to see that the following
lemma holds.

\begin{lemma} \label{l3.6}
Suppose we are given a collection of
single-layer force vectors 
$\{(F^i_1,F^i_2,F^i_3), i = 1,\dots,N \}$
at image source locations
$\{ Q_i = (\xi^i_1,\xi^i_2,-\xi^i_3), i = 1,\dots,N \}$
lying in a source box ${\bf s}$
centered at $(S_1,S_2,S_3)$.
Then $\Phi_{\Bs}$ is given by the far field representation
\begin{equation}
\Phi_{\Bs}(\xb) \approx \partial_{x_3}^{-2} 
\sum_{n=0}^p\sum_{m=-n}^n M_n^m \, 
	     Y_n^m(\theta,\phi)/ r^{n+1} ,
\label{mpole}
\end{equation}
where 
\begin{equation}
M_n^m = \sum_{i=1}^N 
(F^i_1,F^i_2,F^i_3) \cdot \nabla
(\rho_i^n\cdot Y_n^{-m}(\alpha_i,\beta_i))
   \label{mpcoeff}
\end{equation}
and $(\rho_i, \alpha_i, \beta_i)$ are the spherical coordinates of
$Q_i$ with respect to the center of $\bf s$.  A similar formula
for the multipole expansion induced by a collection of double-layer
sources can be obtained from Lemma \ref{l3.2}.
\end{lemma}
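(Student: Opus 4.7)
The plan is to combine the identity $\Bs = \partial_{x_3}^{-2}(1/R)$ from Remark \ref{r3.3} with the standard multipole expansion for the Newtonian potential of a collection of point dipoles. Starting from the definition (\ref{PhiBdef}) and commuting the scalar operator $\partial_{x_3}^{-2}$ past the finite sum and past $\nabla_{\overline \xi}$, I would first rewrite
\[
\Phi_{\Bs}(\xb) = \partial_{x_3}^{-2} \sum_{i=1}^N (F_1^i, F_2^i, F_3^i) \cdot \nabla_{\overline \xi} \frac{1}{R^{(i)}}.
\]
Each summand inside the antiderivative is now the ordinary Newtonian potential at $\xb$ of a point dipole located at the image source $Q_i = (\xi_1^i,\xi_2^i,-\xi_3^i)$ with moment $(F_1^i,F_2^i,F_3^i)$.

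Next, I would invoke the standard multipole expansion for $1/R^{(i)}$ about the source box center $(S_1,S_2,S_3)$, valid at any target $\xb$ outside the smallest sphere enclosing ${\bf s}$. In the normalization (\ref{ynmpnm}), the multipole coefficients for a unit point charge at $Q_i$ are proportional to $\rho_i^n\, Y_n^{-m}(\alpha_i,\beta_i)$. Because $\nabla_{\overline \xi}$ is precisely the ordinary Cartesian gradient with respect to the position of $Q_i$, taking the directional derivative against $(F_1^i, F_2^i, F_3^i)$ amounts to replacing each such charge coefficient by its directional derivative against ${\bf F}^i$. Summing over $i$ and truncating at order $p$ then yields an expansion of the form (\ref{mpole}) whose coefficients are
\[
M_n^m = \sum_{i=1}^N (F_1^i, F_2^i, F_3^i) \cdot \nabla \bigl(\rho_i^n \, Y_n^{-m}(\alpha_i, \beta_i) \bigr),
\]
exactly matching (\ref{mpcoeff}).

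The main subtlety, and the only step that is not routine, is the outer operator $\partial_{x_3}^{-2}$. Applied termwise to the multipole series it is not literally convergent, and it introduces constants of integration that are at most linear in the Cartesian coordinates of $\xb$ (corresponding in the plane-wave picture of Remark \ref{r3.3} to the nonintegrable factor $1/\sigma^2$ near $\sigma=0$). As observed in Remark \ref{r3.3}, however, any such linear terms are annihilated by the two outer Cartesian derivatives that Lemma \ref{l3.1} applies to $\Phi_{\Bs}$ in producing the physical displacement, so the ambiguity is harmless at the level of the quantity actually evaluated by the FMM. The analogous formula for the double-layer kernel follows at once by replacing the first-order directional derivative with the second-order differential operator that defines $\Ds$ in Lemma \ref{l3.2}, as noted in Remark \ref{dlprmk}.
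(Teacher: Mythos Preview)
Your proposal is correct and follows the same approach the paper itself sketches: the paper simply states that by combining the standard multipole expansion for a collection of dipoles at the $B$-image locations with the identity $\Bs = \partial_{x_3}^{-2}(1/R)$ from (\ref{dzminus2}), the lemma follows. You have merely filled in the details the paper omits, including the same observation from Remark~\ref{r3.3} about the harmless linear constants of integration.
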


The difficulty with this representation is that 
$\partial_{x_3}^{-2}$ applied to 
a spherical harmonic $Y_n^m(\theta,\phi)/ r^{n+1}$
is a nonstandard special function  if $n-|m| \leq 1$.
To see why, we recall the following fact about spherical harmonics:

\begin{lemma} 
\label{dzminus2ynm}
Let $n-|m| \geq 2$. Then 
\begin{eqnarray*}
 \partial_{x_3}^{-2} \, Y_n^m(\theta,\phi)/ r^{n+1} = &  \\
\frac{\sqrt{2n+1}}{\sqrt{2n-3}} &
{\sqrt{\frac{1}{(n-m)(n-m-1)(n+m)(n+m-1)}}}
Y_{n-2}^m(\theta,\phi)/ r^{n-1}  \, .
\end{eqnarray*}
\end{lemma}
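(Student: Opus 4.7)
The plan is to verify the identity by applying $\partial_{x_3}^2$ to the right-hand side and checking that the result is $Y_n^m(\theta,\phi)/r^{n+1}$ times the reciprocal of the stated constant; the antiderivative is then uniquely determined in the class of decaying outgoing harmonics (any constants of integration introduced would be polynomial in $x_3$ and, as indicated in Remark \ref{r3.3}, are annihilated by the subsequent differentiations that produce the displacement).

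First I set $S_n^m(\bx) := Y_n^m(\theta,\phi)/r^{n+1}$ for the standard outgoing solid harmonic. A single application of $\partial_{x_3}$ must return a scalar multiple of $S_{n+1}^m$: the operator commutes with rotations about the $x_3$-axis (so it preserves the azimuthal index $m$), maps harmonic functions to harmonic functions, and lowers the degree of homogeneity from $-(n+1)$ to $-(n+2)$. Since the space of harmonic functions that are homogeneous of degree $-(n+2)$ and transform as $e^{im\phi}$ under azimuthal rotation is one-dimensional, there exists a scalar $c_{n,m}$ with $\partial_{x_3} S_n^m = c_{n,m}\, S_{n+1}^m$.

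Next I determine $c_{n,m}$ by writing $\partial_{x_3} = \cos\theta\,\partial_r - r^{-1}\sin\theta\,\partial_\theta$ and invoking the classical associated-Legendre identity
\[
(1-x^2)\,\frac{d}{dx}P_n^{m}(x) = -nx\,P_n^{m}(x) + (n+m)\,P_{n-1}^{m}(x),
\]
together with the normalization (\ref{ynmpnm}). This yields $c_{n,m} = -\sqrt{(n-m+1)(n+m+1)(2n+1)/(2n+3)}$. Composing twice, starting from $S_{n-2}^m$ (which is defined precisely because $n-|m|\geq 2$), gives
\[
\partial_{x_3}^2 S_{n-2}^m = c_{n-2,m}\, c_{n-1,m}\, S_n^m
= \sqrt{\frac{(n-m)(n-m-1)(n+m)(n+m-1)(2n-3)}{2n+1}}\, S_n^m,
\]
since the two negative signs cancel and the factor $(2n-1)$ telescopes between the two radicals. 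Inverting this positive scalar reproduces the formula exactly.

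The main obstacle is the bookkeeping in the computation of $c_{n,m}$: the paper uses the $(n-|m|)!/(n+|m|)!$ normalization in (\ref{ynmpnm}) together with the full complex phase $e^{im\phi}$, rather than the $(n-m)!/(n+m)!$ convention common elsewhere, so the recurrence constant must be derived (or imported) under this exact convention — a misplaced factor of $(-1)^m$ or a swapped $m \leftrightarrow |m|$ would corrupt the final radical. Once that single computation is pinned down, the one-dimensionality argument from Step 1 and the telescoping of $(2n-1)$ in Step 3 make the rest routine.
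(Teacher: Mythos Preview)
Your proof is correct and is in the same spirit as the paper's one-line argument: the paper simply invokes the Maxwell--Hobson characterization of outgoing solid harmonics as iterated partial derivatives of $1/r$, and that characterization is exactly the ladder relation $\partial_{x_3} S_n^m = c_{n,m}\,S_{n+1}^m$ that you establish via the one-dimensionality argument. Your explicit computation of $c_{n,m}$ through the associated-Legendre recurrence and the telescoping of the $(2n-1)$ factor fills in details the paper leaves to the cited reference.
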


The proof is based on the well-known characterization of spherical harmonics 
as partial differential operators acting on $1/r$ (see, for example,
\cite{Thesis}).

To avoid this difficulty,
we will design two special rings of ``charge"
on the surface of a sphere enclosing the source box,
which will annihilate all multipole contributions of the form 
$Y^n_m(\theta,\phi)$
with $n-|m| \leq 1$.

\begin{lemma}
\label{ynmkill}
Let $\sigma_1$  denote a continuous distribution of $\Bs$-type 
sources on a ring $\Rs_1$ lying 
at the latitude corresponding to $\theta_1$ 
on the sphere of radius $R$.
Let $\sigma_2$  denote a continuous distribution of $\Bs$-type 
sources on a ring $\Rs_2$ of the same radius 
at the latitude corresponding to $\theta_2 = \pi - \theta_1$.
The multipole expansion induced by these rings of charge takes the form
(\ref{mpole}) with 
\begin{align*}
M_n^m = 
& \sqrt{\frac{2n+1}{4 \pi}}
\sqrt{\frac{(n-|m|)!}{(n+|m|)!}} R^{n+1} \sin \theta_1 \times \\
& \int_0^{2\pi} 
\left[ \sigma_1(\phi) P_n^{|m|}(\cos \theta_1) 
+ \sigma_2(\phi) P_n^{|m|}(-\cos \theta_1) 
\right] e^{-im\phi} \,
d\phi \, .
\label{coeffring}
\end{align*}
\end{lemma}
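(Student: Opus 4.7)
The plan is to compute the multipole coefficients $M_n^m$ of equation (\ref{mpole}) induced by the two rings of charge directly from the single-source formula in Lemma \ref{l3.6}, specialized to a scalar $\Bs$-type source (i.e.\ omitting the gradient/force factor). By Remark \ref{r3.3}, $\Phi_{\Bs}$ is obtained from the standard Laplace potential $1/R$ by applying $\partial_{x_3}^{-2}$, so its multipole coefficients are inherited from the classical harmonic expansion of $1/R$. Specifically, a point $\Bs$-type source of strength $q$ located at $Q_i$ with spherical coordinates $(\rho_i,\alpha_i,\beta_i)$ relative to the source box center contributes $q\,\rho_i^n Y_n^{-m}(\alpha_i,\beta_i)$ to $M_n^m$; this is the scalar analogue of (\ref{mpcoeff}) and follows immediately from the standard addition theorem for solid harmonics.

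Next I would parameterize each ring. On $\Rs_1$, the spherical coordinates relative to the box center are $(\rho,\alpha,\beta)=(R,\theta_1,\phi)$, and the element of arc length is $R\sin\theta_1\,d\phi$. On $\Rs_2$, they are $(R,\pi-\theta_1,\phi)$, with the same arc-length factor because $\sin(\pi-\theta_1)=\sin\theta_1$. By linearity of the multipole operator, the total $M_n^m$ is obtained by integrating the point-source contribution against the densities $\sigma_1,\sigma_2$:
\[
M_n^m \;=\; R^{n+1}\sin\theta_1 \int_0^{2\pi}\!\bigl[\sigma_1(\phi)\,Y_n^{-m}(\theta_1,\phi)
 + \sigma_2(\phi)\,Y_n^{-m}(\pi-\theta_1,\phi)\bigr]\,d\phi .
\]
Substituting the explicit form (\ref{ynmpnm}) for $Y_n^{-m}$, extracting the $\phi$-independent normalization constant, and using $\cos(\pi-\theta_1)=-\cos\theta_1$ in the associated Legendre function yields the claimed expression, with $P_n^{|m|}(\cos\theta_1)$ multiplying $\sigma_1$ and $P_n^{|m|}(-\cos\theta_1)$ multiplying $\sigma_2$. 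The $e^{-im\phi}$ factor from $Y_n^{-m}$ remains outside the bracket and completes the formula.

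There is no real obstacle beyond bookkeeping: the only delicate point is confirming that the normalization of the scalar single-source multipole coefficient is indeed $\rho^n Y_n^{-m}(\alpha,\beta)$ under the specific convention (\ref{ynmpnm}) adopted by the authors for $Y_n^m$, which determines the absence of any $4\pi/(2n+1)$ prefactor in the stated $M_n^m$. Once that convention check is made, the derivation is essentially one line of substitution followed by use of the parity identity for $P_n^{|m|}$ at $\pm\cos\theta_1$. Note also that, although the lemma is stated in the service of constructing the two rings that kill all terms with $n-|m|\le 1$, the statement itself makes no such claim — it is purely a formula for the multipole coefficients of an arbitrary pair of ring densities, and the annihilation property is deferred to the subsequent design of $\sigma_1$ and $\sigma_2$.
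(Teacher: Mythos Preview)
Your proposal is correct. The paper does not actually supply a proof of this lemma; it is stated as a direct computational fact, with the subsequent Lemma~\ref{ynmkill2} appealing to it together with ``the definition of $Y_n^m$ and some straightforward algebra.'' Your derivation---take the scalar analogue of (\ref{mpcoeff}), integrate the point-source contribution $\rho^n Y_n^{-m}(\alpha,\beta)$ against the ring densities with the arc-length weight $R\sin\theta_1\,d\phi$, then substitute (\ref{ynmpnm}) and use $\cos(\pi-\theta_1)=-\cos\theta_1$---is exactly the computation the authors are suppressing, and it reproduces the stated formula line for line. Your caveat about the $4\pi/(2n+1)$ normalization is well placed: the paper's own formula (\ref{mpcoeff}) already fixes the convention that a unit scalar source contributes $\rho^n Y_n^{-m}$ to $M_n^m$, so no extra prefactor appears, and the lemma is internally consistent with that choice.
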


\begin{lemma}
\label{ynmkill2}
Suppose that $\Phi_{\Bs}$ is given by the far field representation
(\ref{mpole}) and let 
\[ \sigma_1(\phi) = \sum_{m=-n}^n \sigma_1^{(m)} e^{im \phi}, \]
\[ \sigma_2(\phi) = \sum_{m=-n}^n \sigma_2^{(m)} e^{im \phi}, \]
with $\sigma_1^{(m)}$ and $\sigma_2^{(m)}$
chosen to solve the linear system
\[  
\left( \begin{array}{cc} 1 & 1 \\ 1 & -1 \end{array} \right)
\left( \begin{array}{c} \sigma_1^{(m)} \\ \sigma_2^{(m)} \end{array} \right)
= 
\left( \begin{array}{c} 
M_m^m/( P_m^m(\theta_1) C_m \sin \theta_1 R^{m}) \\ 
M_{m+1}^m/( P_{m+1}^m(\theta_1) D_m \sin \theta_1 R^{m+1}) 
\end{array} \right) \, ,
\]
where 
\[ C_m = \frac{1}{4\pi[(2m)!]}, \quad 
D_m = \frac{1}{4\pi[(2m+1)!]}\, .
\]
Let 
\begin{equation}
\Psi(\xb) \approx \partial_{x_3}^{-2} 
\sum_{n=0}^p\sum_{m=-n}^n P_n^m \, 
	     Y_n^m(\theta,\phi)/ r^{n+1} 
\end{equation}
denote the far field expansion induced by the charge
distributions $\sigma_1^{(m)}$ and $\sigma_2^{(m)}$ lying on the 
rings defined in Lemma \ref{ynmkill}.
Then the multipole expansion of $\Phi_{\Bs} - \Psi$ takes the form
\begin{equation}
\Phi_{\Bs} - \Psi(\xb) \approx \partial_{x_3}^{-2} 
\sum_{n=0}^p\sum_{m=-n}^n (M_n^m - P_n^m) \, 
	     Y_n^m(\theta,\phi)/ r^{n+1} ,
\end{equation}
with 
\[ M_n^m-P_n^m = 0, \quad{\rm if}\ n-|m| \leq 1. \]
\end{lemma}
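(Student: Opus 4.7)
The plan is to substitute the Fourier series for $\sigma_1,\sigma_2$ into the ring-representation of Lemma~\ref{ynmkill}, then combine azimuthal orthogonality with the parity of the associated Legendre functions to reduce the problem to an uncoupled $2\times 2$ system for each azimuthal index $m$.

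First I would write $\sigma_j(\phi)=\sum_{m'}\sigma_j^{(m')}e^{im'\phi}$ for $j=1,2$ and use $\int_0^{2\pi}e^{i(m'-m)\phi}\,d\phi = 2\pi\delta_{m,m'}$ so that the integral in Lemma~\ref{ynmkill} collapses to
\[ P_n^m = 2\pi\sqrt{\tfrac{2n+1}{4\pi}}\sqrt{\tfrac{(n-|m|)!}{(n+|m|)!}}\,R^{n+1}\sin\theta_1\bigl[\sigma_1^{(m)}P_n^{|m|}(\cos\theta_1)+\sigma_2^{(m)}P_n^{|m|}(-\cos\theta_1)\bigr]. \]
Thus for each fixed $m$, the ring-induced moments $\{P_n^m\}_{n\ge|m|}$ decouple across azimuthal modes and are completely determined by the pair $(\sigma_1^{(m)},\sigma_2^{(m)})$.

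Next I invoke the standard parity relation $P_n^{|m|}(-x)=(-1)^{n-|m|}P_n^{|m|}(x)$. At $n=|m|$ the two ring contributions add with the same sign, yielding a multiple of $\sigma_1^{(m)}+\sigma_2^{(m)}$; at $n=|m|+1$ they add with opposite signs, yielding a multiple of $\sigma_1^{(m)}-\sigma_2^{(m)}$. Imposing $P_{|m|}^m=M_{|m|}^m$ and $P_{|m|+1}^m=M_{|m|+1}^m$ then produces exactly the $2\times 2$ linear system stated in the lemma, whose coefficient matrix $\bigl(\begin{smallmatrix}1&1\\1&-1\end{smallmatrix}\bigr)$ has determinant $-2$ and is uniquely solvable. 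The constants $C_m$ and $D_m$ arise from simplifying the common prefactor $2\pi\sqrt{(2n+1)/(4\pi)}\sqrt{(n-|m|)!/(n+|m|)!}$ at $n=|m|$ and $n=|m|+1$, using $(n+|m|)!=(2m)!$ and $(2m+1)!$ respectively.

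With this choice of $\sigma_1^{(m)},\sigma_2^{(m)}$, we have $P_n^m=M_n^m$ for $n\in\{|m|,|m|+1\}$ by construction, and $M_n^m=P_n^m=0$ trivially when $n<|m|$. Hence $M_n^m-P_n^m=0$ whenever $n-|m|\le 1$, as claimed. The only delicate point is book-keeping the normalization constants so that the derived prefactors agree with the stated $C_m,D_m$ and the advertised powers of $R$; this is pure algebra but must be tracked carefully (and may reveal minor typographic adjustments in the $R$-exponents). Conceptually, the entire result rests on azimuthal Fourier orthogonality together with the equatorial parity of the associated Legendre functions, and no estimate or convergence argument is needed.
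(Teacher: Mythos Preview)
Your proposal is correct and follows essentially the same approach as the paper: the paper's own proof simply states that the result follows from Lemma~\ref{ynmkill}, the definition of $Y_n^m$, and ``some straightforward algebra,'' and what you have written is precisely that algebra made explicit---Fourier orthogonality in $\phi$ to decouple the modes, followed by the parity relation $P_n^{|m|}(-x)=(-1)^{n-|m|}P_n^{|m|}(x)$ to reduce the matching at $n=|m|$ and $n=|m|+1$ to the stated $2\times 2$ system. Your caveat about carefully tracking the normalization constants and $R$-exponents is well placed, since the paper's displayed $R^m$, $R^{m+1}$ do not literally match the $R^{n+1}$ of Lemma~\ref{ynmkill} at $n=|m|,|m|+1$; this is exactly the sort of typographic adjustment you anticipated.
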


\begin{proof}
The result follows from Lemmas \ref{ynmkill}, the definition of 
$Y_n^m$ and some straightforward algebra.
\end{proof}

The point of this rather complicated computation is that we have
a new, efficient far field representation for $\Phi_{\Bs}$:
\begin{equation}
\Phi_{\Bs} = (\Phi_{\Bs} - \Psi) + \Psi \, ,
\label{comborep}
\end{equation}
where $\Psi$ is given by
\[ \Psi(\xb) = \int_0^{2\pi} \Bs(R_1,R_2,R_3) \sigma_1({\bf \eta^{(1)}}) 
d{\bf \eta^{(1)}}
+
\int_0^{2\pi} \Bs(R_1,R_2,R_3) \sigma_2({\bf \eta^{(2)}}) d{\bf \eta^{(2)}} 
\, ,
\]
and $\eta^{(1)},\eta^{(2)}$ are parametrizations of the rings $\Rs_1$
and $\Rs_2$ in Lemma \ref{ynmkill}. Here, $R_1 = x_1 - \eta_1$, $R_2 = x_2 -
\eta_2$, and $R_3 = -(x_3 + \eta_3)$, as usual in the Okada notation, and
Lemma \ref{dzminus2ynm} can be
used to construct a simple multipole expansion for the difference
$(\Phi_{\Bs}-\Psi)$.

\begin{lemma} \label{l3.10}
Suppose the function $\Phi_{\Bs}$
of Lemma \ref{l3.6} is expanded in the form 
(\ref{comborep}). Then,
a local expansion of $\Phi_{\Bs}$ can be computed using
$O(p^3)$ operations.
\end{lemma}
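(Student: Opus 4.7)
The plan is to exploit the decomposition $\Phi_{\Bs} = (\Phi_{\Bs} - \Psi) + \Psi$ provided by (\ref{comborep}), form the local expansion in the target box for each piece separately, and verify that each piece costs $O(p^3)$.

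For the ring contribution $\Psi$, the observation is that $\sigma_1(\phi)$ and $\sigma_2(\phi)$ are truncated Fourier series with $O(p)$ modes, so the trapezoidal rule on $O(p)$ equispaced nodes on each of $\Rs_1$ and $\Rs_2$ resolves them to spectral accuracy. This reduces $\Psi$ to a sum of $O(p)$ discrete $\Bs$-type point sources on the two rings. For each such source one forms its local expansion about the target-box center via the projection formula (\ref{coeff}), which by standard arguments costs $O(p^2)$ operations per source (evaluating the $Y_n^m$ and the corresponding shifted $\Bs$-type kernel values on the $2p \times 2p$ tensor-product grid). With $O(p)$ sources the total cost is $O(p^3)$.

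For the multipole difference $(\Phi_{\Bs} - \Psi)$, Lemma \ref{ynmkill2} guarantees that $M_n^m - P_n^m = 0$ whenever $n - |m| \leq 1$. Thus every surviving term in the expansion (\ref{mpole}) satisfies the hypothesis of Lemma \ref{dzminus2ynm}, so that
\[
\partial_{x_3}^{-2}\,\frac{Y_n^m(\theta,\phi)}{r^{n+1}} \;=\; c_{n,m}\,\frac{Y_{n-2}^m(\theta,\phi)}{r^{n-1}},
\]
with $c_{n,m}$ given explicitly in Lemma \ref{dzminus2ynm}. After reindexing, $\Phi_{\Bs} - \Psi$ is a standard outgoing multipole expansion of order $p$ centered in the source box $\bf s$. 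Its translation to a local expansion of order $p$ in the target box is carried out by the rotation-based multipole-to-local operator described in \cite{fmmActa}: a $z$-axis rotation of the multipole coefficients, a translation along the $z$-axis, and an inverse rotation. Each of these three stages has cost $O(p^3)$, and the overall M2L is $O(p^3)$.

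Summing the two local expansions completes the computation, with a total cost of $O(p^3)$. The only place where care is required is the reindexing step: one must verify that the rescaling factor $c_{n,m}$ from Lemma \ref{dzminus2ynm} is applied consistently to the shifted indices $(n-2,m)$, and that the rings have been placed so that the resulting multipole expansion is valid outside the source box. These are bookkeeping issues rather than substantive obstacles, since the annihilation of the low-$(n-|m|)$ coefficients in Lemma \ref{ynmkill2} is precisely what ensures applicability of Lemma \ref{dzminus2ynm} term by term.
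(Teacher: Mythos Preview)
Your argument follows the paper's proof almost exactly: split $\Phi_{\Bs}$ via (\ref{comborep}), handle $(\Phi_{\Bs}-\Psi)$ by applying Lemma~\ref{dzminus2ynm} termwise to obtain a standard multipole expansion and then a rotation-based $O(p^3)$ multipole-to-local translation, and handle $\Psi$ by discretizing $\sigma_1,\sigma_2$ with $O(p)$ nodes on each ring and using the projection (\ref{coeff}) onto spherical harmonics in the target box.

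The one slip is in your cost accounting for $\Psi$. You write that forming the local expansion via (\ref{coeff}) ``costs $O(p^2)$ operations per source.'' That is not right: the projection (\ref{coeff}) produces $O(p^2)$ coefficients $L_n^m$, each obtained from an integral over an $O(p^2)$ tensor-product grid, and even with a Fourier transform in $\phi$ the projection itself is $O(p^3)$, independently of the number of sources. If you truly perform a separate projection per source, the total is $O(p^4)$, not $O(p^3)$. The fix---and this is what the paper does---is to first evaluate the aggregate field $\Psi$ (all $O(p)$ ring sources summed) at the $O(p^2)$ grid points on a sphere enclosing the target box, which costs $O(p)\cdot O(p^2)=O(p^3)$, and then apply the projection (\ref{coeff}) \emph{once} to this summed data, at cost $O(p^3)$. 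With that reorganization your argument is complete and matches the paper's.
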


\begin{proof}
We first note that the smooth functions 
$\sigma_1({\bf \eta^{(1)}})$ and  
$\sigma_2({\bf \eta^{(2)}})$ can be sampled using $2p$ equispaced
points on the rings $\Rs_1$ and $\Rs_2$, since they have frequency content
bounded above by $p$. Computing the multipole expansions for 
$\Phi_{\Bs}$ and $\Psi$ requires only $O(p^2)$ work. Applying
$\partial_{x_3}^{-2}$ using Lemma \ref{dzminus2ynm} also requires $O(p^2)$
work. Mapping the multipole expansion to a local expansion in a target
box requires $O(p^3)$ work using the rotation-based scheme outlined in 
\cite{fmmActa}. Finally, the 
local expansion of $\Psi$ can be computed by evaluating $\Psi$ at 
$O(p^2)$ points on a sphere enclosing the target box and using 
the projection (\ref{coeff}). Both the evaluation and projection
step require $O(p^3)$ work.
\end{proof}

We will make use of the preceding results in section \ref{sec:numerical}.
Before that, however, we need to account for the field due to the ``$C$"
images.

\section{The $C$ image}\label{Cformulas}

Ignoring the $\frac{1}{4 \pi \mu}$ scaling, a little algebra shows that
the $C$ image contributions take the form:

\[ u_1^C = (2-\alpha) \frac{R_1F_3}{R^3} +  \alpha \xi_3 
\left[ \frac{F_1}{R^3} -  \frac{3R_1(F_1R_1+F_2R_2+F_3R_3)}{R^5} \right],
\]
\[ u_2^C = (2-\alpha) \frac{R_2F_3}{R^3} +  \alpha \xi_3 
\left[ \frac{F_2}{R^3} -  \frac{3R_2(F_1R_1+F_2R_2+F_3R_3)}{R^5} \right],
\]
\[ u_3^C = -(2-\alpha) \left( -
\frac{(R_1F_1+R_2F_2)}{R^3} \right) -
 \alpha \xi_3 
\left[ \frac{F_3}{R^3} -  \frac{3R_3(F_1R_1+F_2R_2+F_3R_3)}{R^5} \right].
\]
To simplify this, let $V = \frac{1}{R}$ and let 
subscripts on $V$ denote differentiation with respect to $R_i$.
Thus,
\[ V_i = \frac{-R_i}{R^3} \, , 
\qquad V_{ij} = \frac{3R_iR_j}{R^5} \ {\rm for}\ i\neq j, \qquad
 V_{ii} = \left( \frac{-1}{R^3} +\frac{3R_i^2}{R^5} \right). \]
A modest amount of algebra shows that, for the single-layer kernel, 
$(u_1^C,u_2^C,u_3^C) = \nabla_{\xb} \Phi_{\Cs} - (0,0,\Hs)$, where
\begin{equation} 
\Phi_{\Cs} = (2-\alpha)F_3 V - \alpha \xi_3 
[(F_1,F_2,-F_3) \cdot \nabla_{\overline \xi} V ],
\end{equation} 
\begin{equation}
 \Hs = -(2-\alpha) [(F_1,F_2,-F_3) \cdot \nabla_{\overline \xi} V ] \, ,
\end{equation} 
and $\nabla_{\overline \xi}$ denotes the gradient with respect to the 
image source location at $(\xi_1,\xi_2,-\xi_3)$.
For the double-layer kernel, the $C$ image contribution takes the form
\[ 
   (u_1^C,u_2^C,u_3^C) = \nabla_{\xb} \Phi_{\Cs} - (0,0,\Hs),
\]
where
\begin{align} \Phi_{\Cs} = &-2\alpha\xi_3\mu \left[ F_1\nu_1 V_{\xi_1\xi_1} +
F_2\nu_2 V_{\xi_2\xi_2} + F_3\nu_3 V_{\xi_3\xi_3} +
(F_2\nu_1+F_1\nu_2) V_{\xi_1\xi_2} \right]  \nonumber \\
&+ 2\alpha\xi_3\mu \left[ (F_1\nu_3 + F_3 \nu_1) V_{\xi_1\xi_3} +
(F_2\nu_3+F_3\nu_2) V_{\xi_2\xi_3} \right] 
 +  {\bf G} \cdot \nabla_{\xi} V ,
\end{align}
with
\[ {\bf G} = (2\alpha-2) [ \mu(F_1\nu_3+\nu_3F_1, 
F_2\nu_3+\nu_3F_2,-2F_3\nu_3) - (0,0,
\lambda (\nub \cdot \bf{F})) \, 
\]
and 
\begin{align} 
\Hs &= -2(2-\alpha)\mu \left[ F_1\nu_1 V_{\xi_1\xi_1} +
F_2\nu_2 V_{\xi_2\xi_2} + F_3\nu_3 V_{\xi_3\xi_3} +
(F_2\nu_1+F_1\nu_2) V_{\xi_1\xi_2} \right.  \nonumber \\
& \left. -(F_1\nu_3 + F_3 \nu_1) V_{\xi_1\xi_3} -
(F_2\nu_3+F_3\nu_2) V_{\xi_2\xi_3} \right]  \, .
\end{align}

Since the functions that appear here are all derivatives of $V = \frac{1}{R}$,
they fall within the framework of the FMM for the Laplace equation, and 
we omit detailed formulas about the formation and manipulation
of multipole expansions.
They can be found in \cite{fmmadap,fmmActa}.

\section{Informal description of the FMM}

The most straightforward implementation of a fast multipole method 
for the Mindlin solution is to call an evaluation routine three times:
once of the free-space (Kelvin-type) interactions, once for the $A$ images,
and once for the $B$ and $C$ images. By placing sources or targets
at the image locations for the $A$, $B$ and $C$ interactions, we can
carry out each calculation as if it were in free space.

In the FMM, one begins by defining the computational domain to be
the smallest cube in $\R^3$ containing all sources and targets
\cite{fmmadap}.
This is defined to be refinement level 0.
The domain is then subdivided into smaller and smaller boxes. More precisely,
refinement level $l+1$ is obtained from
level $l$ by subdividing each box $b$ at level $l$ into eight cubic boxes of
equal size. These small boxes are said to be {\em children} of $b$.
and $b$ is referred to as their {\em parent}. 
This recursive process is halted when a box contains fewer
than $s$ sources and/or targets, where $s$ is a free parameter.
Such boxes are referred to as leaf nodes and are {\em childless}.
If the box under consideration contains no sources or targets, 
it is deleted from the data structure. 

\begin{definition}
Two boxes at the same refinement level 
are said to be {\em colleagues} if they 
share a boundary point. A box is considered to be a colleague of itself.
The set of colleagues of a box $b$ will be denoted by $Coll(b)$.
\end{definition}

\begin{definition}
Two boxes are said to be {\em well separated} if they are at the same
refinement level and are not colleagues.
\end{definition}

\begin{definition}
\label{intlist}
With each box $b$ is associated an {\em interaction list}, consisting
of the children of the colleagues of $b$'s parent which are
well separated from box $b$.
\end{definition}

Note that a box can have up to 
27 colleagues and that its interaction list contains
up to 189 boxes.

\begin{definition}
{\em List 1} of a childless box $b$, denoted by $L_1(b)$,
 is defined to be the set
consisting of $b$ and all childless boxes adjacent to $b$.
If $b$ is a parent box, its List 1 is empty. 
\end{definition}

\begin{definition}
{\em List 2} of a box $b$, denoted by $L_2(b)$,
 is the set consisting of
all children of the colleagues of $b$'s parent  that are
well separated from $b$. 
\end{definition}

\begin{definition}
{\em List 3} of a childless box $b$, denoted by $L_3(b)$,
is the set consisting of
all descendents of $b$'s colleagues that are 
not adjacent to $b$, but whose parent boxes are
adjacent to $b$.
If $b$ is a parent box, its list 3 is empty.
\end{definition}

Any box $c$ in $L_3(b)$ is smaller than $b$ and is
separated from $b$ by a distance not less than the side of $c$,
and not greater than the side of $b$.

\begin{definition}
{\em List 4} of a box $b$, denoted by $L_4(b)$,
consists of boxes $c$ such that
$b \in L_3({c})$; in other words, $c \in L_4(b)$
if and only if $b \in L_3(c)$.
\end{definition}

\vspace{.1in}
{\sf
\vspace{.1in}
\hrule
\vspace{.1in}
\begin{center} Adaptive FMM for the $B$ and $C$ images  \end{center}
\vspace{.1in}
\hrule
\vspace{.1in}

\noindent
{\bf Initialization}

\noindent
Choose precision $\varepsilon$
and the order of the multipole expansions  $p$.
Choose the maximum number $s$ of charges allowed in a childless box.
Define $B_0$ to be the smallest cube containing all sources/targets
(the computational domain).

\vspace{.1in}

\noindent
{\bf Build Tree Structure} \\
\noindent
{Step 1} \\
The greatest refinement level is denoted by 
$L_{max}$ and the total
number of boxes created is denoted by $N_B$. 
Create the four lists for each box.

\vspace{.1in}

\noindent
{\bf Upward Pass} \\
(During the upward pass, $p$th-order multipole expansions are formed for
 each box $b$ containing image sources.)

\vspace{.1in}

\noindent
{Step 2a} \\
For each childless box that contains image sources,
use Lemma \ref{l3.6} to form the $p$th-order multipole expansion  
for $\Phi_{\Bs}$ and standard multipole formulas to form
expansions for $\Phi_{\Cs}$,$\Hs$.

\vspace{.1in}

\noindent
{Step 2b} \\
Beginning with the leaf nodes, carry out an upward recursion to shift
each multipole expansion to the parent's center.

\vspace{.1in}

\noindent
{\bf Downward Pass} \\
During the downward pass, a $p$th-order local expansion is generated for
each box $b$ about its center, representing the potential 
in $b$ due to all charges outside $(L_1(b) \cup L_3(b))$.

\vspace{0.1in}

\noindent
{Step 3} \\
For each box $b$, add to its local expansion the
 contribution due to $\Bs$ and $\Cs$-type sources in $L_4(b)$.
This can be done using 
the projection formula (\ref{coeff}) for $\Phi_{\Bs}$ and
from standard formulas \cite{fmmadap} for $\Phi_{\Cs}$ and $\Hs$.

\vspace{0.1in}

\noindent
{Step 4} \\
For each box $b$ containing sources and each box $c \in L_2(b)$ containing
targets, transmit far field information from $b$ to $c$. 

\vspace{0.1in}
\noindent
If the boxes are separated in the $x_3$-direction,
this is accomplished by 
converting the multipole expansions to plane wave expansions using
Theorem 3.3 of \cite{fmmadap}. These plane wave expansions are 
translated in diagonal form using Lemma \ref{l3.5}. 
For the $\Phi_{\Bs}$ expansion, the operator $\partial_{x_3}^{-2}$
can be applied as discussed in Remark \ref{r3.3}. 

\vspace{0.1in}
\noindent
If the boxes are not separated in the $x_3$-direction,
then the $\Phi_{\Cs}$ and $\Hs$ expansions can still be translated using 
any ``multipole-to-local" translation operator.
For the $\Phi_{\Bs}$ expansion, use Lemma \ref{l3.10}.

\vspace{0.1in}
\noindent
Once all plane wave expansions have been received by a given box $c$
for $\Phi_{\Bs}$, $\Phi_{\Cs}$ and $\Hs$,
use Theorem 3.4 of \cite{fmmadap} to 
convert each of the net plane wave expansions
into a local expansion and add to the corresponding local 
expansions associated with box $c$.

\vspace{0.1in}

\noindent
{Step 5} \\
For each parent box $b$, 
shift the center of its local expansions to its children.

\vspace{0.1in}

\noindent
{\bf Evaluation of displacement, stress and and strain}

\vspace{0.1in}

\noindent
{Step 6} \\
For each target in each childless box $b$
compute contribution to displacement, stress and strain
from local expansions in $b$.

\vspace{0.1in}

\noindent
{Step 7} \\
For each childless box $b$, calculate the contribution to the 
displacement, stress and strain directly from all 
image sources in $L_1(b)$.

\vspace{0.1in}

\noindent
{Step 8} \\
For each childless box $b$, and for each box $c \in L_3(b)$,
calculate the displacement, stress and strain at each target in $b$
from the multipole expansions for $\Phi_{\Bs}, \Phi_{\Cs}$ and 
$\Hs$. For $\Phi_{\Bs}$, this is done using Lemma \ref{l3.10}.
}
\vspace{.1in}
\hrule

\begin{remark}
We have not included detailed formulas for stress and strain here
since they are quite lengthy and not very informative. They
involve derivatives of the displacement vector, and 
the FMM provides a natural framework for this calculation.
One simply differentiates the 
local spherical harmonic expansions in each target box to obtain
the far field contributions. 
\end{remark}

\section{Numerical experiments}\label{sec:numerical}

In this section, we present timing results for the elastostatic FMM in
a half-space. All calculations were carried out using double-precision
arithmetic on a 3.1 GHz Xeon workstation with 128GB of RAM. For
comparison, we also present timings for the underlying harmonic FMM
and for the free space elastostatic FMM.  In all tables below, $N$
denotes the number of sources, $Prec$ denotes the precision parameter
(the number of digits requested from the FMM), $T^{harm}_{FMM}$
denotes the time required by the harmonic FMM for dipole sources,
$T^{elast}_{FMM}$ denotes the time required by the free space
elastostatic FMM for double-layer sources, $T^{Mindlin}_{FMM}$ denotes
the time required by the half-space elastostatic FMM for double-layer
sources, and $T^{harm}_{dir}$, $T^{elast}_{dir}$, $T^{Mindlin}_{dir}$
denote the times required by direct summation methods. The direct
timings are estimated from the actual timings using $N/100$ sources.
To make timings comparable, we compute the potentials together with
their first and second derivatives in the harmonic FMM, and the
displacements and strains in both the free and half-space elastostatic
FMMs.

To test the performance of the scheme, we carried out experiments
with sources distributed randomly on the surface of a cylinder with
unit radius and unit height. We denote the relative $L_2$ errors at
$N/100$ evaluation locations by $E_{harm}$ for the computed potentials
in the harmonic FMM, and $E_{elast}$, $E_{Mindlin}$ for the computed
displacements in the free and half-space elastostatic FMMs.

As expected, the FMM scales approximately linearly and the work
required for the free space elastostatic FMM is approximately 4 times
greater than for a corresponding harmonic FMM. The timing analysis for
the half-space elastostatic FMM is more complicated due to additional
FMM calls used to process the $A$, $B$, and $C$ images. Since these images are
well-separated from the evaluation locations, the local interaction
work is typically smaller, yielding slightly better timings for
this part of the calculation. In our implementation, the total work
required for the half-space elastostatic FMM is approximately 7 to 8 times
greater than for the corresponding harmonic FMM.

\begin{table}[h]
\caption{Timing results for harmonic dipoles and elastostatic double-layer 
sources in free space.
\label{tab3}}
\begin{center} 
\begin{tabular}{r|r|r|r|r|r|r|r}

\hline 

$N$ & $Prec$ & $T^{harm}_{FMM}$ & $T^{harm}_{dir}$ & $E_{harm}$ & $T^{elast}_{FMM}$ & $T^{elast}_{dir}$ & $E_{elast}$ \\

\hline 

 5000 &  2 & 0.140 & 1.248 & 1.244e-06 & 0.608 & 6.448 & 9.848e-07 \\ 
 5000 &  3 & 0.316 & 1.261 & 2.673e-08 & 1.389 & 6.447 & 3.612e-08 \\ 
 5000 &  6 & 0.544 & 1.231 & 2.323e-11 & 2.353 & 6.448 & 2.414e-10 \\ 
50000 &  2 & 2.337 & 123.757 & 6.179e-06 & 9.432 & 645.378 & 1.097e-05 \\ 
50000 &  3 & 3.556 & 124.390 & 2.132e-07 & 14.216 & 644.941 & 5.763e-07 \\ 
50000 &  6 & 7.844 & 127.218 & 4.156e-11 & 32.980 & 645.023 & 9.362e-10 \\ 
500000&  2 & 21.892 & 13045.236 & 3.397e-06 & 96.022 & 66505.855 & 3.703e-05\\ 
500000&  3 & 54.362 & 13499.781 & 9.809e-08 & 234.691 & 65616.089 & 1.589e-06\\ 
500000&  6 & 84.764 & 13593.953 & 6.365e-11 & 370.229 & 65679.543 & 1.649e-09\\ 

\hline 

\end{tabular}
\end{center}
\end{table}

\begin{table}
\caption{Timing results for elastostatic double-layer sources in a half-space (Mindlin's solution).
\label{tab4}}
\begin{center} 
\begin{tabular}{r|r|r|r|r}

\hline 

$N$ & $Prec$ & $T^{Mindlin}_{FMM}$ & $T^{Mindlin}_{dir}$ & $E_{Mindlin}$ \\

\hline 

 5000 &  2 & 1.067 & 72.212 & 8.097e-06 \\ 
 5000 &  3 & 2.222 & 71.808 & 2.975e-07 \\ 
 5000 &  6 & 5.250 & 71.918 & 6.551e-10 \\ 
50000 &  2 & 14.463 & 7181.319 & 3.788e-05 \\ 
50000 &  3 & 24.926 & 7182.631 & 3.737e-06 \\ 
50000 &  6 & 63.668 & 7171.081 & 1.340e-09 \\ 
500000 &  2 & 154.798 & 718416.170 & 7.103e-05 \\ 
500000 &  3 & 373.080 & 719714.520 & 1.962e-06 \\ 
500000 &  6 & 707.994 & 717146.230 & 1.327e-09 \\ 

\hline 

\end{tabular}
\end{center}
\end{table}

\section{Conclusions }\label{sec:conclusions}

In this paper, we have presented a fast multipole method for elastostatic 
interactions using Mindlin's solution --- the Green's function that satisfies
the condition of zero normal stress in a half-space.
We hope that the algorithm will prove useful in geophysical modeling.

\section{Acknowledgements}
We thank Shidong Jiang, Michael Minion, Michael Barall, Terry Tullis, Jim
Dieterich, and Keith Richards-Dinger for useful conversations.
This work was supported by the National Science Foundation under Grant
DMS-0934733 and by the Department of Energy under contract
DEFG0288ER25053.


\begin{thebibliography}{1}

\bibitem{fmmadap}
Cheng, H., Greengard, L., and Rokhlin, V.:
A fast adaptive multipole algorithm in three dimensions.
J. Comput. Phys., \textbf{155}, 468--498 (1999).

\bibitem{fongdarve}
{Fong,~W. and Darve,~E.}:
{The black-box fast multipole method}, 
{J. Comput. Phys.}, \textbf{228},
{8712--8725}, (2009).

\bibitem{frangi}
{Frangi, A.}, 
{A fast multipole implementation of the qualocation mixed-velocity-traction approach for exterior {S}tokes flow}, 
{Engineering Analysis with Boundary Elements},
\textbf{29}, {1039--1046}, (2005).

\bibitem{rodin1}
{Fu, Y., Klimkowski,~K.~J., Rodin,~G.~J., Berger,~E.,
Browne,~J.~C., Singer,~J.~K., van de Geijn,~R.~A., and Vemaganti,~K.~S.}:
{A fast solution method for three-dimensional many-particle 
problems of linear elasticity},
{Int. J. Numer. Methods Engineering} \textbf{42}, {1215--1229}, (1998).

\bibitem{rodin2}
{Fu,~Y. and Rodin,~G.~J.}: 
{Fast solution
methods for three-dimensional Stokesian many-particle problems},
{Comm. in Numer. Methods in Engineering}, \textbf{16}, {145--149}, (2000).

\bibitem{mindlinquad}
Z. Gimbutas, L. Greengard, M. Barall, and T. E. Tullis,
{On the Calculation of Displacement, Stress and Strain
Induced by Triangular Dislocations}
Bull. Seismol. Soc Am., \textbf{102}, {2776--2780}, (2012).  


\bibitem{Thesis}
{Greengard, L.},
{\em The Rapid Evaluation of Potential Fields in Particle Systems},
MIT Press, Cambridge, Mass. (1988).

\bibitem{fmmActa}
{Greengard, L. and Rokhlin, V.}:
{A new version of the fast multipole method for the 
{L}aplace equation in three dimensions},
{Acta Numerica}, \textbf{6}, {229--270} {(1997).}

\bibitem{duraiswami1}
{Gumerov,~N.~A. and Duraiswami,~R.}:
{Fast multipole method for the biharmonic equation in three dimensions},
{J. Comput. Phys.}, \textbf{215}, {363--383}, (2006).

\bibitem{maruyama}
Maruyama, T.:
On the force equivalents of dynamical elastic dislocations with
reference to the earthquake mechanism,
{Bulletin of the Earthquake Research Institute} 
\textbf{41}, 467--486, (1963).

\bibitem{mindlin}
Mindlin, R.D.:
Force at a point in the interior of a semi-infinite solid,
{Physics} 
\textbf{7}, 195--202, (1936).

\bibitem{MF}
{Morse and Feshbach}:
{Methods  of Theoretical Physics},
McGraw-Hill, New York 
(1953).

\bibitem{okada}
Okada, Y.:
Internal deformation due to shear and tensile faults in a half-space.
{Bulletin of the Seismological Society of America},
  \textbf{82}, 1018--1040 (1992).

\bibitem{tornberg}
A.-K. Tornberg and  L. Greengard,
{A Fast Multipole Method for the Three Dimensional Stokes Equations},
J. Comput. Phys. {\bf 227}, 1613--1619 (2008).
 
\bibitem{wang2}
{Wang,~H., Lei,~T., Huang,~J., and Yao,~Z.}: 
{A parallel fast multipole accelerated integral equation 
scheme for 3{D} {S}tokes equations},
{Internat. J. Numer. Methods Engrg}, \textbf{70}, {812-839}, (2007).

\bibitem{wangwhite}
X. Wang, J. Kanpka, W. Ye, N.R. Aluru, J. White, 
Algorithms in Fast Stokes and its application to micromachined device 
simulation, 
IEEE Trans. Comput. Aided Des. Integra. Circ. Syst. {\bf 25}, 248--257 (2006).

\bibitem{asme_rev}
{Wang,~M.~Z., Xu,~B.~X., and Gao,~C.~F.}: 
{Recent general solutions in linear elasticity and their applications},
{Appl. Mech. Rev.}, \textbf{61}, {030803}, (2008).

\bibitem{lesar}
Wang, Y. H. and LeSar, R.:
{O(N) algorithm for dislocation dynamics},
{Philosophical Magazine A}, \textbf{71}, {149--163} (1995).

\bibitem{YARVIN}
{N. Yarvin and V. Rokhlin},
{``Generalized Gaussian quadratures and singular value decompositions
of integral operators"},
{SIAM J. Sci. Comput.}, \textbf{20}, 699--718 (1998).

\bibitem{zorin1}
{Ying,~L., Biros,~G., and  Zorin,~D.}:
{A kernel independent adaptive fast multipole algorithm 
in two and three dimensions},
{J. Comput. Phys.}, \textbf{196},
{591--626}, (2004).

\bibitem{nishimura1}
{Yoshida, K., Nishimura,~N., and Kobayashi,~S.},
{Application of a fast multipole Galerkin boundary integral
method to elastostatic crack problems in 3D},
{Int. J. Numer. Methods Engineering}, \textbf{50}, {525--547}, (2001).

\end{thebibliography}
\end{document}